\newif\ifappendix
\newif\ifenablecomments
\newcommand\ifcomments[1]{{
    \ifenablecomments
        #1
    \else
    \fi}}
\newcommand{\pcmax}{\texorpdfstring{$P||C_{\mathrm{max}}$}{P||Cmax}}
\newcommand{\cmax}{\texorpdfstring{C_{\mathrm{max}}}{Cmax}}
\newcommand\todo[1]{\ifcomments{\textcolor{orange}{#1}}}
\newcommand{\pcmaxinstance}{$(W, m)$} 
\newcommand{\pcmaxdecisioninstance}{$(W, m, U)$} 
\newcommand{\N}{\mathbb{N}}
\newcommand{\bnb}{BnB}
\newcommand{\decisionlvl}{\ell}
\newcommand{\decisionlvlplus}{i}
\newcommand{\wlogen}{\mbox{w.l.o.g.}}
\newcommand{\funcleft}{\textsl{left}}
\newcommand{\funcright}{\textsl{right}}
\DeclareMathOperator{\modnospace}{mod}
\newcommand{\psfrage}[1]{\ifcomments{{\color{blue}{\sf[PS: #1]}}}} 
\newcommand{\algalgorithm}{CDSM}
\newcommand{\algtrue}{\textsc{true}}
\newcommand{\algfalse}{\textsc{false}}
\newcommand{\algbestsolution}{$S$}
\newcommand{\algstatetable}{ST}
\newcommand{\algirrelevanceindex}{$\rho$}
\newcommand{\algret}{\textsc{RET}}
\newtheorem{pruning rule}{Pruning Rule}
\begin{document}

\newcommand\relatedversion{}

\title{\Large Engineering Optimal Parallel Task Scheduling} 
\author{Matthew Akram\thanks{Karlsruhe Institute of Technology, Karlsruhe, Germany. mazfh85246@gmail.com, nikolai.maas@kit.edu, sanders@kit.edu, dominik.schreiber@kit.edu}
\and Nikolai Maas\footnotemark[1]
\and Peter Sanders\footnotemark[1]
\and Dominik Schreiber\footnotemark[1]}

\date{}

\maketitle


\ifappendix\else
\fancyfoot[R]{\scriptsize{Copyright \textcopyright\ 2025 by SIAM\\
Unauthorized reproduction of this article is prohibited}}
\fi





\begin{abstract} \small\baselineskip=9pt 
The NP-hard scheduling problem \pcmax{} encompasses a set of tasks with known execution time which must be mapped to a set of identical machines such that the overall completion time is minimized.
In this work, we improve existing techniques for optimal \pcmax{} scheduling with a combination of new theoretical insights and careful practical engineering.
Most importantly, we derive techniques to prune vast portions of the search space of branch-and-bound (\bnb{}) approaches
and propose improved upper and lower bounding techniques. 
Moreover, we present new benchmarks for \pcmax{}, based on diverse applications, which can shed light on aspects that prior synthetic instances fail to capture.
In extensive evaluations, we observe that our pruning reduces the number of explored nodes by 90$\times$ and running times by 12$\times$.
Compared to a state-of-the-art ILP-based approach, our approach is preferable for short running time limits and for instances with large makespans.
\end{abstract}

\vspace{.2cm}
Artifact: \href{https://zenodo.org/records/13835637}{https://zenodo.org/records/13835637}

\section{Introduction}

One of the most fundamental yet still challenging load balancing problems in computer science is the problem of \emph{unconstrained, non-preemptive task scheduling} on \emph{identical parallel machines}.
This problem is often denoted by its Graham notation \pcmax{} \cite{graham1979optimization},
where $P$ represents identical machines running in parallel.
The jobs to process are given as a set of integers, each indicating the time (or \emph{work}) a job takes.
Our objective is to minimize $\cmax$, the maximum makespan (completion time) of any machine.
\pcmax{} is a strongly NP-hard problem~\cite{GareyJohnson:78:Strong-NP-completeness} and, as such, demands either inexact solving or exponential complexity~\cite{berndt2022load}
in order to conquer non-trivial instances (unless $P=NP$).
We focus on \emph{branch-and-bound} (\bnb) schemes~\cite{dellamico_martello_1995,walter2020characterization}, which are able to find \textit{optimal solutions} by heuristically pruning the search space.

In this work, we advance existing techniques for optimal \pcmax{} scheduling.
With the original intention of exploring efficient reductions of \pcmax{} to propositional satisfiability (SAT), we have investigated ways to constrain the search space of \pcmax{} \emph{decision instances}.
The objective of a decision instance is to decide for a particular $U$ whether a solution with $\cmax \leq U$ exists.
Focusing on this decision problem yields a series of interesting and powerful insights which
prove especially useful for \bnb{} based scheduling approaches.

Our contributions are the following.
First and foremost, we introduce new pruning rules based on the \pcmax{} decision problem which can prune vast portions of an instance's effective search space, especially for \bnb{} (Section~\ref{sec:search-space-pruning}).
Secondly, we improve existing techniques to compute lower and upper bounds for $\cmax{}$ (Section~\ref{sec:bounds}).
We then present an efficient implementation of our techniques within a \bnb{} scheme (Section~\ref{sec:implementation}).
Furthermore, we introduce a new, large benchmark set crafted from realistic and diverse application data to complement prior, purely synthetic benchmark sets. 
We evaluate our contributions in thorough experiments (Section~\ref{sec:evaluation}) 
and observe drastic improvements.
In particular, our \bnb{} pruning techniques reduce explored nodes by two orders of magnitude, which results in substantial speedups and allows to solve 64\% more instances. 
Compared to a state-of-the-art approach based on Integer Linear Programming (ILP)~\cite{8248744}, our approach solves fewer instances but is significantly faster and tends to scale better to large makespans.

Some preliminary results of our work were briefly announced in
\cite{10.1145/3626183.3660268}\nolinebreak
---only covering rules R\ref{pruning_rule:inter} and R\ref{pruning_rule:fur}.

\section{Preliminaries}\label{sec:preliminaries}


A \pcmax{} \emph{optimization instance} $(W, m)$ is defined by $n$ durations $W = \{w_1, \ldots, w_n\}$ of $n$ corresponding jobs $J = \{j_1, \ldots, j_n\}$ and by the number $m$ of identical processors $P = \{p_1, \ldots, p_m\}$.
An instance $(W, m)$ asks for an \emph{assignment} $A = \{ a_1, \ldots, a_n \}$ of jobs to machines ($1 \leq a_i \leq m$) such that the \emph{maximum completion time} $\cmax := \max_{i}\{ \sum_{k\,|\,a_k = i} w_k \}$ is minimized.
By contrast, a \emph{decision instance} $(W, m, U)$ additionally imposes an \emph{upper bound} $U$ for $\cmax$ and poses the question whether a feasible solution exists.
Throughout this paper we assume that the jobs are sorted by duration in decreasing order (i.e., $w_1 \geq w_2 \geq \ldots \geq w_n$).


A \emph{branch-and-bound} (\bnb{}) algorithm for \pcmax{} is a tree-like search where we extend an initially empty \emph{partial assignment} $A$ of jobs to processors until $|A| = n$.
At each \emph{decision level} $\decisionlvl$, where $|A|=\decisionlvl$,
we assign job $j_{\decisionlvl+1}$ to a processor.
Each possible assignment constitutes a decision (``branch'') of the algorithm.
In addition, we maintain admissible bounds on $\cmax$ during the search, which allows us to exclude decisions which inevitably lead to sub-optimal solutions (``bound'').
We recursively search the remaining decisions in a heuristically chosen order.
For each partial assignment $A$, we define the \emph{assigned workload} of processor $p_x$ as $C_x^A := \sum_{i\,|\,a_i=x} w_i$.
The \emph{least loaded} processor is the one with smallest $C_{x}^A$.

\subsection{Related Work.}

\pcmax{} has been extensively studied as it has a simple formulation
and is yet difficult to solve, with many interesting aspects 
both in theory and practice.
For a thorough summary of works on \pcmax{}, we refer to the excellent related work sections by Lawrinenko et al.~\cite{dissertation_lawrinenko,walter2020characterization} and Mrad et al.~\cite{8248744}, covering most works up to 2017.
More recent research has mostly focused on generalized problem variants with additional constraints.
Since the \pcmax{} \emph{decision} problem is equivalent to the widely-studied bin packing problem (BPP),
exact algorithms for the BPP can be of interest for \pcmax{} scheduling.
While a further comparison is out of scope for this work, we refer to Delorme et al.~\cite{bp_survey_2016} for an overview of such techniques.

A simple and fast \pcmax{} approximation is provided by \mbox{Graham}'s \emph{longest-processing-time-first} (LPT) strategy,
which assigns jobs in decreasing order of duration to the currently least loaded processor
and has an approximation ratio of 4/3~\cite{graham1969bounds}.
There are also 
\mbox{EPTAS} for \pcmax{},
which provide arbitrarily good approximations in theory.
However, achieving approximations that substantially outperform simpler algorithms is rather difficult in practice~\cite{berndt2022load}.

An important technique is the identification of \emph{upper and lower bounds} on the optimal makespan of a given \pcmax{} instance.
Restricting the range of admissible makespans can greatly benefit exact solvers, especially if 
one of the given bounds is tight.
Haouari et al.~\cite{lifting_1, lifting_2} present
state-of-the-art lower bounding techniques, as well as two lifting procedures to further improve lower bounding algorithms.
They also propose the \emph{multi-start subset sum} upper bounding technique, which is essentially a local search scheme with restarts.
Dell’Amico et al.~\cite{dellamico2008heuristic} introduce a scatter-search based bounding technique that gives their exact solving scheme excellent performance.
Since \emph{any} \pcmax{} scheduling approach constitutes an upper bounding technique,
literature on upper bounds is vast~\cite{dissertation_lawrinenko}.

We now turn to exact solving procedures.
Historically one of the first approaches, Rothkopf~\cite{rothkopf1966scheduling} presented in 1966 a dynamic programming approach in $O(n\cdot U^m)$.
Mokottof~\cite{mokotoff2004exact} proposes a cutting plane algorithm where valid inequalities are identified and added to an LP encoding until the respective solution is integer.
Dell'Amico and Martello~\cite{dellamico_martello_1995} introduced the first \bnb{} algorithm for \pcmax{} as well as pruning rules to decrease the size of the search space.
Haouari and Jemmali~\cite{lifting_2} use a different branch order to construct a novel \bnb{} algorithm 
which compares favorably to the algorithm by Dell'Amico and Martello.
In addition, they argue that future work should focus on instances with $n/m$ close to 2.5,
since most other instances proved easy to solve in their experiments.
Dell'Amico et al.~\cite{dellamico2008heuristic} present a hybrid approach with excellent performance even on large instances.
It consists of a scatter search algorithm, followed by an ILP translation and a dedicated \bnb{} scheme to solve the latter.\footnote{Unfortunately, 
a full reimplementation of this approach was out of scope for the work at hand.}
Lawrinenko~\cite{dissertation_lawrinenko, walter2020characterization} studies the structure of solutions of the \pcmax{} problem to develop new pruning rules for a \bnb{} algorithm.
The algorithm improves performance on the instances proposed by Haouari and Jemmali,
but incurs notable overhead
on other instance sets.
Finally, Mrad et al.~\cite{8248744} present a pseudo-polynomial ILP encoding of the \pcmax{} problem with great performance on the instances proposed by Haouari and Jemmali.

\section{Pruning the Search Space of \pcmax{}}\label{sec:search-space-pruning}


The performance of
exact solvers
heavily depends on how they explore the space of possible assignments of jobs to processors.
In the following, we investigate effective methods to prune the search space of a \pcmax{} instance.
We begin with \emph{pruning rules}---also known as \emph{dominance criteria} or \emph{symmetry breaking}---that restrict the structure of a solution
while preserving feasibility of the instance.
\todo{NM: one review comment said the distinction between optimization and decision problem is missing here, I tried to address this with the following sentence}
This includes rules which preserve at least one optimal assignment and are thus directly applicable to the \pcmax{} \emph{optimization} problem,
as well as rules which only preserve a feasible solution for the \pcmax{} \emph{decision} problem.
We then show how to detect dead-ends in the search space
by transferring \emph{clause learning} techniques known from SAT solving.

\subsection{Prior Pruning Rules.} 

Dell'Amico and Martello~\cite{dellamico_martello_1995} present a number of \bnb{} pruning rules
for the \pcmax{} \emph{optimization} problem.
For each rule we consider a partial assignment $A$ at decision level $\decisionlvl$.
Remember that we always assume the jobs are sorted by duration (i.e., $w_1 \geq \ldots \geq w_n$).

\begin{pruning rule}\label{pruning_rule:equiv_old}
 If there are multiple processors $P := \{p_x, p_y, \ldots\}$ with identical loads $C_x^A = C_y^A = \ldots$,
 then only \emph{one} processor in $P$ has to be considered when assigning $j_{\decisionlvl+1}$.
\end{pruning rule}

\begin{pruning rule}\label{pruning_rule:interchangeable_jobs}
    If $w_\decisionlvl = w_{\decisionlvl+1}$
    , then only decisions with $a_{\decisionlvl+1} \le a_\decisionlvl$ need to be considered.
\end{pruning rule}

\begin{pruning rule}\label{pruning_rule:last_levels}
    If three jobs remain unassigned, only two options need to be considered:
    \emph{(1)} Assign each of the jobs to the least loaded processor respectively and \emph{(2)}
    assign the third-to-last job to the \emph{second} least loaded processor, then assign the other two jobs as in \emph{(1)}.
\end{pruning rule}

\begin{pruning rule}\label{pruning_rule:fewer_jobs_than_procs}
    If $i < m$ jobs remain unassigned, then only the $i$ least loaded processors need to be considered for the next decision.
\end{pruning rule}

These are the only rules from the literature that we use in this paper; for more rules we refer to the literature on the subject~\cite{dellamico_martello_1995, lifting_2, dissertation_lawrinenko}.

\subsection{New Pruning Rules.} 

In the following, we consider a \textit{decision} problem instance $(W, m, U)$ and a feasible partial assignment $A$ at decision level $\decisionlvl$.
We use $i := \decisionlvl + 1$ as a shorthand for referring to the job $j_i$ which is assigned next.
We first present a new rule that
allows us to efficiently handle the base case where all remaining jobs share the same duration.
\begin{pruning rule}\label{pruning_rule:equal_remaining_jobs}
    If all unassigned jobs have equal duration (i.e., $w_{\decisionlvlplus} = \cdots = w_n$),
    then there is a valid completion of $A$ if and only if
    \begin{equation}
        \label{eq:equal_remaining_jobs}
        \sum_{x=1}^m \left\lfloor \frac{U - C_x^A}{w_n} \right\rfloor \ge n - \decisionlvl
    \end{equation}
    If there is a valid completion for $A$, then such a completion follows from iteratively assigning each remaining job to a processor $p_x$ which satisfies $C_x^A + w_n \le U$.
\end{pruning rule}

\begin{proof}
    In the following, we use $s_x := \left\lfloor \frac{U - C_x^A}{w_n} \right\rfloor$ for the number of ``slots'' on processor $p_x$.
    If Equation~\ref{eq:equal_remaining_jobs} holds, we can choose a completion $S$ of $A$ in such a way that for each processor $p_x$, at most $s_x$ additional jobs are assigned to $p_x$.
    This implies $C_x^S \le C_x^A + s_x w_n \le U$.
    For the reverse direction, consider any valid completion $S$ of $A$ and let $n_x$ be the number of additional jobs assigned to processor $p_x$.
    $C_x^S \le U$ implies $n_x \le s_x$ and therefore $n - \decisionlvl = \sum_{x=1}^m n_x \le \sum_{x=1}^m s_x$.
\end{proof}

Intuitively speaking,
the exact mapping of jobs to slots is inconsequential, 
thus only one assignment needs to be considered.
We can also restate this rule for the \pcmax{} \textit{optimization} problem as follows:
If all remaining jobs have the same duration,
an optimal completion 
can be obtained via the LPT algorithm.


Next, we provide stronger rules for breaking symmetries between processors.
Assume that we already assigned $u < U$ work to a certain processor and now decide whether to assign $j_\decisionlvlplus$ to this processor.
Given
the set $J_\decisionlvlplus := \{j_\decisionlvlplus, \ldots, j_n\}$ of \emph{smaller jobs}, we define the function
\[
\phi(\decisionlvlplus, u) := \{ J' \subseteq J_\decisionlvlplus \mid u + \sum_{j_k \in J'} w_k \leq U\}.
\]

Intuitively, the function $\phi$ lists all possible combinations of jobs we can assign to a processor to still have an assigned workload (on this processor) $\leq U$.
With that we obtain the following pruning rules.

\begin{pruning rule}\label{pruning_rule:inter}
    
     If there are multiple processors $P := \{p_x, p_y, \ldots\}$ such that $\phi(\decisionlvlplus, C_x^A) = \phi(\decisionlvlplus, C_y^A) = \ldots$,
     then only \emph{one} processor in $P$ has to be considered when assigning $j_\decisionlvlplus$.
\end{pruning rule}

\begin{proof}
    Consider a feasible completion $S$ of $A$ where $j_i$ is assigned to $p_x \in P$.
    Let $J_x$ be the jobs that are assigned to $p_x$ in $S$ and are not already assigned in $A$, and analogously let $J_y$ be the jobs assigned to some other processor $p_y \in P$.
    Since $J_x \in \phi(\decisionlvlplus, C_x^A) = \phi(\decisionlvlplus, C_y^A) \ni J_y$ by definition of $\phi$, swapping $J_x$ and $J_y$ also results in a feasible solution.
\end{proof}

This is a generalization of R\ref{pruning_rule:equiv_old} which uses the $\phi$ function for even stronger symmetry breaking.

\begin{pruning rule}[The Fill-Up Rule, \emph{FUR}]\label{pruning_rule:fur}
    \todo{NM: should we remove the ``largest unassigned'' qualification, since it is redundant to $\phi(\decisionlvlplus, C_x^A) = \phi(\decisionlvlplus, U - w_\decisionlvlplus)$? MA: You are right here, it used to not be redundant because in practice we apply the FUR 'out of order' but because it can be more effective. Here because we change $j_i$ to refer to '$j_{l+1}$' it is superfluous.}
    If $j_\decisionlvlplus$ is the largest unassigned job that can still be assigned to processor $p_x$ (i.e., $C_x^A + w_\decisionlvlplus \leq U$) and\\
    $\phi(\decisionlvlplus, C_x^A) = \phi(\decisionlvlplus, U - w_\decisionlvlplus)$,
    then
    we only need to consider the decision $a_\decisionlvlplus = x$.
\end{pruning rule}
\begin{proof}
    Consider a completion $S$ of $A$ where $j_\decisionlvlplus$ is assigned to $p_y$, with $y \neq x$.
    Let $J'$ be the jobs that are assigned to $p_x$ in $S$ and are not already assigned in $A$, and let $w'$ be their summed duration.
    $J'$ being assigned to $p_x$ requires that
    $J' \in \phi(\decisionlvlplus, C_x^A) = \phi(\decisionlvlplus, U - w_\decisionlvlplus)$
    by definition of $\phi$, which in turn implies $w' \le w_\decisionlvlplus$.
    Therefore, swapping $j_\decisionlvlplus$ with $J'$ results in a solution $S'$ where $C_x^{S'} \leq U$, $C_y^{S'} \leq C_y^{S} \leq U$ and $j_\decisionlvlplus$ is assigned to $p_x$.
    Thus, assigning $j_\decisionlvlplus$ to $p_x$ maintains feasibility.
\end{proof}

In addition to this, the rule enables pruning when the upper bound is updated.
This is described in the following theorem.

\begin{theorem}\label{theorem:fur_upper_bound}
    Consider the case that the FUR 
    applies to job $j_\decisionlvlplus$ and processor $p_x$
    and let $A' := A \cup \{ a_\decisionlvlplus = x \}$ be the partial assignment created by it.
    If an optimal completion of $A'$ has makespan $U'$ such that $U' > C_x^A + w_\decisionlvlplus$,
    then it is also an optimal completion of $A$.
\end{theorem}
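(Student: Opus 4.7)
My plan is to argue by contradiction, reusing the swap from the proof of R\ref{pruning_rule:fur}. Since $A \subseteq A'$, every completion of $A'$ is also a completion of $A$, so the optimum $V$ of $A$ satisfies $V \le U'$. I assume toward contradiction that $V < U'$ and let $S$ be an optimal completion of $A$ with makespan $V$; the goal is to exhibit a completion $S'$ of $A'$ whose makespan is strictly less than $U'$, which will contradict the optimality of $U'$ on $A'$ and thus force $V = U'$.

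If $S$ already places $j_i$ on $p_x$, then $S$ itself is a completion of $A'$ and we are done. Otherwise $j_i$ sits on some $p_y$ with $y \neq x$ in $S$. Let $J'$ be the set of jobs that $S$ places on $p_x$ but that are not already in $A$, and write $w'$ for their summed duration. Applying the exchange from R\ref{pruning_rule:fur} moves $j_i$ onto $p_x$ and $J'$ onto $p_y$, while leaving all other processors untouched; this yields a completion $S'$ of $A'$ with $C_x^{S'} = C_x^A + w_i$, $C_y^{S'} = C_y^S - w_i + w'$, and $C_z^{S'} = C_z^S$ for $z \notin \{x, y\}$.

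The key inequality is $w' \le w_i$, which I would derive from the FUR hypothesis $\phi(i, C_x^A) = \phi(i, U - w_i)$: because $C_x^S = C_x^A + w' \le V \le U$, the set $J'$ lies in $\phi(i, C_x^A)$ and hence also in $\phi(i, U - w_i)$, yielding $w' \le w_i$. Therefore $C_y^{S'} \le C_y^S \le V$, and the makespan of $S'$ equals $\max(V, C_x^A + w_i)$. Both $V < U'$ (contradiction hypothesis) and $C_x^A + w_i < U'$ (theorem hypothesis) are strict, so this maximum is strictly below $U'$, giving the contradiction and proving $V = U'$.

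The most subtle step is the inequality $V \le U$ that I use to place $J'$ in $\phi(i, C_x^A)$. This is automatic in the BnB setting, where $U$ is maintained as an upper bound on the best known makespan: once a completion of $A'$ with makespan $U'$ has been exhibited one may assume $U \ge U'$, hence $V < U' \le U$. Everything else is essentially the two-line bookkeeping already worked out in the proof of R\ref{pruning_rule:fur}, so I do not expect further conceptual obstacles beyond keeping the two strict inequalities $V < U'$ and $C_x^A + w_i < U'$ aligned through the exchange.
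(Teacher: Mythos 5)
Your proof is correct and takes essentially the same route as the paper: both argue by contradiction that a completion of $A$ strictly better than $U'$ would, via the FUR exchange ($w' \le w_i$, so swapping $J'$ with $j_i$ keeps all loads below $U'$ since $C_x^A + w_i < U'$), yield a completion of $A'$ strictly better than $U'$. The only difference is presentational: the paper re-invokes the FUR as a black box on the decision instance $(W,m,U'-1)$, while you inline the swap at the original bound $U$; your explicit appeal to $U' \le U$ (hence $V \le U$) in the branch-and-bound setting is the same implicit assumption the paper needs when asserting that the FUR is applicable for $(W,m,U'-1)$, so making it explicit is if anything a small improvement.
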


\begin{proof}
    Assume there is a feasible completion of $A$ for the decision instance $(W, m, U' - 1)$.
    Then $A$ is a valid partial assignment and since $C_x^{A} + w_\decisionlvlplus \le U' - 1$, the FUR is applicable to $p_x$.
    The correctness of the FUR implies that if a feasible completion exists for $A$,
    a feasible completion (with makespan at most $U' - 1$) also exists for $A'$, which is a contradiction.
\end{proof}



The effectiveness of the rules that use the $\phi$ function highly depends on the sizes of the remaining jobs.
In the following, we show that the smallest jobs can be removed from the instance in some cases,
thereby enabling stronger symmetry breaking via the $\phi$ function.

\begin{theorem}\label{theorem:smallest_job}
    If the following inequality holds,
    \[
        \sum_{i=1}^{n-1} w_i < m \left( U - w_n + 1 \right),
    \]
    the instance $(W, m, U)$ is equivalent to $(W \setminus \{ w_n \}, m, U)$, where the smallest job is removed.
\end{theorem}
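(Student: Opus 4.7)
My plan is to interpret ``equivalent'' as mutual feasibility, i.e., to show that $(W, m, U)$ admits a feasible schedule if and only if $(W \setminus \{w_n\}, m, U)$ does, and then to handle the two directions separately.

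The forward direction is immediate: given any feasible assignment $A$ for $(W, m, U)$, removing the smallest job $j_n$ from whichever processor it occupies only decreases the load on that processor, so the resulting assignment is feasible for $(W \setminus \{w_n\}, m, U)$. This direction does not require the hypothesis.

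The interesting direction is the converse, and here is where I would use the hypothesis via a simple pigeonhole argument. Suppose $A'$ is a feasible assignment for $(W \setminus \{w_n\}, m, U)$. The total load satisfies $\sum_{x=1}^{m} C_x^{A'} = \sum_{i=1}^{n-1} w_i < m(U - w_n + 1)$ by assumption. Hence the minimum load $\min_x C_x^{A'}$ is strictly less than $U - w_n + 1$; since all $w_i$ (and hence all $C_x^{A'}$) are integers, this means there exists some processor $p_{x^*}$ with $C_{x^*}^{A'} \leq U - w_n$. Placing $j_n$ on $p_{x^*}$ and leaving all other assignments unchanged yields a new assignment $A$ for $(W, m, U)$ with $C_{x^*}^{A} = C_{x^*}^{A'} + w_n \leq U$ and $C_x^{A} = C_x^{A'} \leq U$ for all other processors, so $A$ is feasible.

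I do not anticipate a real obstacle here: the only subtlety is the strict-vs-nonstrict inequality bookkeeping, which is handled by the ``$+1$'' in the hypothesis combined with integrality of the weights. The hypothesis is exactly the tightest averaging bound that still forces a slot of size $\geq w_n$ to exist in every feasible reduced assignment, which is why the ``$+1$'' appears.
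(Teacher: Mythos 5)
Your proposal is correct and follows essentially the same route as the paper: the paper's proof is exactly your pigeonhole/averaging step (dividing the hypothesis by $m$ to find a processor with load at most $U - w_n$ in any solution of the reduced instance and placing $j_n$ there), with the trivial forward direction left implicit. Your explicit treatment of both directions and the integrality remark are fine additions but not a different argument.
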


\begin{proof}
     For any valid partial assignment $A$ with depth $n - 1$
     (i.e., solution of the new instance),
     dividing the inequality by $m$ shows that there is at least one processor $p_x$ with load $C_x^{n-1} < U - w_n + 1$ and thus $C_x^{n-1} \le U - w_n$.
     Therefore, assigning $w_n$ to $p_x$ yields a solution of the original instance.
\end{proof}

Note that this theorem may apply again for the modified instance, allowing to remove multiple jobs.

\subsection{Efficient Computation.}

Since $\phi$ encompasses a combinatorial co-domain, computing it explicitly would be prohibitively expensive.
Therefore, we introduce an auxiliary data structure named \emph{range equivalency table (RET)}, which allows us to \emph{implicitly} determine whether $\phi(i, C_x^A) = \phi(i, C_y^A)$.

Let us first highlight two interesting properties of $\phi$.
First, $\phi(i, u) \supseteq \phi(i, u+1)$ for all $u < U$.
This is 
because the sets of admissible jobs will never decrease when increasing the available processing time.
Secondly, for $i < n$, the valid ways to assign jobs that may or may not include $j_i$ is equal to the union of valid ways of assigning jobs that do not include $j_i$, and the valid ways of assigning jobs that do:
\begin{align}
\begin{split}
\phi(i, u) =\ 
    & \phi(i+1, u) \\
    & \cup \{j_{i} \cup X \mid X \in \phi(i+1, u + w_{i})\}
    \label{eq:inductive-ret}
\end{split}
\end{align}

We define the RET as an $n \times (U + 1)$ table with entries in $\N$---%
one row per job
and one column for each possible assigned workload.
Note that individual values in the RET have no inherent meaning other than serving as identifiers (IDs) for equivalent $\phi$ sets within the same row.
For a job $j_i$, an \emph{equivalence range} 
is a range of workloads $u, \ldots, u'$ such that $RET[i][u] = \ldots = RET[i][u']$.
For such a range we assert that $\phi(i, u) = \ldots = \phi(i, u')$.
This means that, regardless of whether a processor's load is $u$ or $u'$ (or in between), the possible combinations of jobs from $j_i$ onward which it can still be assigned 
are exactly the same.

%
We construct the RET going from the smallest ($j_n$) to the largest job ($j_1$).
For $j_n$, if $u + w_n \leq U$ then assigning $j_n$ still constitutes a valid workload; $\phi(n,u) = \{ \emptyset, \{j_n\} \}$. Otherwise, assigning no further job is the only option; $\phi(n,u) = \{ \emptyset \}$.
We thus initialize two equivalence ranges: 
$RET[n][u]=1$ for $U-w_n < u \leq U$ and $RET[n][u]=2$ for $0 \leq u \leq U-w_n$.
For job $j_i$ ($i<n$), we 
denote the two relevant prior entries for applying property (\ref{eq:inductive-ret}) as
$\funcleft(i, u) := RET[i+1][u]$ and $\funcright(i,u) := RET[i + 1][u+w_i]$ (with $\funcright(i, u) := 0$ if $u+w_i > U$).
Note that going from some $u$ to $u-1$, the equivalence range for $j_i$ should remain unchanged if and only if both $\funcleft$ and $\funcright$ remain unchanged.
As such, we start by setting $RET[i][U]:= 1$ and then proceed sequentially for $u=U-1,U-2,\ldots,0$:
\[
RET[i][u] := \begin{cases}
    RET[i][u+1], \\
    \hspace{5mm} \textrm{if } \funcleft(i, u) = \funcleft(i, u+1) \\
    \hspace{5mm} \phantom{\textrm{if }} \wedge \funcright(i, u) = \funcright(i, u+1); \\
    RET[i][u+1] + 1, \\
    \hspace{5mm} \textrm{otherwise.}
\end{cases}
\]

\begin{restatable}{theorem}{rettheorem}\label{thm:ret}
    Given an instance \pcmaxdecisioninstance{} of the \pcmax{} decision problem, for any $i \in \{1,\ldots, n\}$ and $u , u'< U$:
    $RET[i][u] = RET[i][u']$
        if and only if
    $\phi(i, u) = \phi(i, u')$.
\end{restatable}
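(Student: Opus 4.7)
The plan is to proceed by induction on $i$, working from $i=n$ downward, mirroring the construction of the RET. The first step is to exploit the two monotonicity properties: $\phi(i,u)\supseteq\phi(i,u')$ whenever $u\le u'$ (stated before the construction), and $RET[i][u]\ge RET[i][u']$ whenever $u\le u'$ (since the construction only increments values as $u$ decreases). Together these allow me to reduce the claim to its adjacent form: it suffices to show, for every relevant $v$, that $RET[i][v]=RET[i][v+1]$ iff $\phi(i,v)=\phi(i,v+1)$, because equality across an entire range is equivalent to constancy on each adjacent pair inside it.

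The base case $i=n$ is a direct check: $\phi(n,u)=\{\emptyset,\{j_n\}\}$ for $u\le U-w_n$ and $\phi(n,u)=\{\emptyset\}$ otherwise, exactly matching the two intervals to which the initialization assigns $RET[n]$ values $2$ and $1$. For the inductive step, the key ingredient is the decomposition~(\ref{eq:inductive-ret}). Its two pieces are disjoint (subsets of $\{j_{i+1},\ldots,j_n\}$ that do not contain $j_i$ versus sets that do), so equality $\phi(i,v)=\phi(i,v+1)$ is equivalent to the conjunction of $\phi(i+1,v)=\phi(i+1,v+1)$ and $\phi(i+1,v+w_i)=\phi(i+1,v+1+w_i)$. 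Applying the inductive hypothesis row by row translates these into $\funcleft(i,v)=\funcleft(i,v+1)$ and $\funcright(i,v)=\funcright(i,v+1)$, which by construction are exactly the conditions under which $RET[i][v]=RET[i][v+1]$.

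The main obstacle will be the boundary between $v+w_i\le U$ and $v+w_i>U$, where $\funcright$ is defined artificially as $0$ rather than as an RET entry. At this transition, the ``include-$j_i$'' branch of $\phi$ disappears (since $j_i$ no longer fits on a processor with load $v+1$), so I must verify that the two conventions line up: the sentinel value $0$ cannot coincidentally equal a legitimate RET entry, because all constructed entries are at least $1$, and the disappearance of the ``include-$j_i$'' piece in $\phi(i,v+1)$ always coincides precisely with the jump in $\funcright$. A secondary concern is that invoking the inductive hypothesis on row $i+1$ might require evaluating it at an index equal to $U$, which is formally excluded from the statement; this isolated endpoint can be handled manually using $RET[i+1][U]=1$ together with the fact that $\phi(i+1,U)=\{\emptyset\}$ for positive durations, so no actual gap appears in the argument.
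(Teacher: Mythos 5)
Your proposal is correct and takes essentially the same route as the paper's proof: induction on $i$ from $n$ downward, using the decomposition in Equation~\ref{eq:inductive-ret} to match $\funcleft(i,\cdot)$ and $\funcright(i,\cdot)$ with the two disjoint parts of $\phi(i,\cdot)$, with your reduction to adjacent pairs via monotonicity being only a bookkeeping variant of the paper's direct treatment of arbitrary $u<u'$ (where the reverse direction is closed by a strict-inclusion argument instead of your disjoint-union partition). Your explicit handling of the sentinel value $0$ and of the index-$U$ endpoint is, if anything, more careful than the paper's own proof, which applies the induction hypothesis at $u+w_i$ without comment.
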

\begin{proof}
    By induction.
    For $i = n$ this property is clear.
    For $i < n$, we assume that Theorem~\ref{thm:ret} holds for $i+1$.
    W.l.o.g., assume $u < u'$.
    If $RET[i][u] = RET[i][u']$, then we know $RET[i+1][u] = RET[i+1][u']$ and $RET[i+1][u + w_i] = RET[i+1][u' + w_i]$.\footnote{This is due to the fact that if $RET[i][u] = RET[i][u']$ then by construction $\funcleft(i,u) = \funcleft(i,u+1) = \ldots = \funcleft(i,u')$ and $\funcright(i,u) = \funcright(i, u+1) = \ldots = \funcright(i,u')$.}
    Via Theorem~\ref{thm:ret} for $i+1$, this implies that 
    \begin{align*}
        \phi(i, u) &=  \phi(i+1, u) \cup \{j_i \cup x \mid x \in \phi(i+1, u + w_i)\} \\ 
        &=  \phi(i+1, u') \cup \{j_i \cup x \mid x \in \phi(i+1, u' + w_i)\} \\
        &= \phi(i, u').
    \end{align*}
    If $RET[i][u] \ne RET[i][u']$, then $\funcleft(i, u) \ne \funcleft(i, u')$ or $\funcright(i,u) \ne \funcright(i,u')$.
    If $\funcleft(i, u) \ne \funcleft(i, u')$, then, via Theorem~\ref{thm:ret} for $i+1$, $\phi(i+1, u) \supsetneq \phi(i+1, u')$.
    Therefore, $\phi(i,u) \supsetneq \phi(i,u')$ since all other elements in either set must contain $j_i$.
    The case where $\funcright(i,u) \ne \funcright(i,u')$ proceeds analogously.
    \todo{review comment on the definition/proof:\\
    - It is not clear to me while RET[i][u] = RET[i][u+1] + 1 in the 'otherwise' case. Why can the number stored in RET[i][u] not increase by more than one compared to RET[i][u+1]?
    -- DS: hopefully addressed by better explanation of the RET cells' meaning\\
- line 200 relies on the monotonicity of RET[i][u] without explicitly mentioning it.
    -- DS: Does it? I think it's now fine as is.
    }
\end{proof}


\subsection{SAT-inspired Learning of Dead-ends.}

\begin{figure}[b!]
    \centering
    \includegraphics[width=0.7\columnwidth]{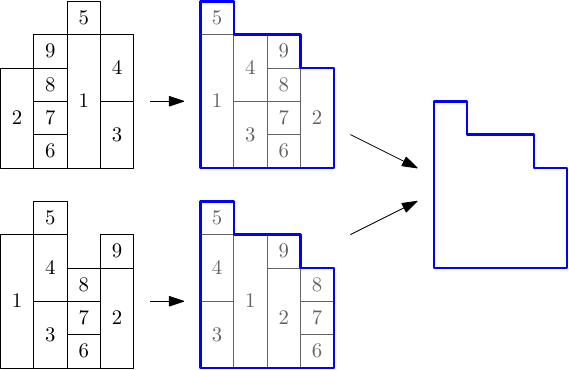}
    \caption{The two assignments on the left are equivalent when re-ordering the machines and then focusing on the \textit{silhouette}.
        Numbers correspond to job indices.
    }
    \label{fig:silhouette}
\end{figure}

For our next contribution, let us examine a crucial feature of modern SAT solvers, namely the \emph{clause learning} performed by the CDCL algorithm~\cite{marques2021conflict}.
When a solver encounters a logical conflict during its search through the space of partial assignments, it detects 
a subset of the current partial assignment that causes this conflict and derives a new clause from it that can be added to the formula.
The SAT solver's logic now ensures that the sub-space triggering the found conflict
will never be revisited (except if the learned clause is discarded).

We 
transfer this technique to our \bnb{} algorithm for \pcmax{}. 
Given a partial assignment that leads to a conflict, i.e., an infeasible assignment,
we ensure that we do not visit the partial assignment again.
To represent 
the relevant information of the current partial assignment in a compact manner, we introduce the notion of \emph{silhouettes}:
Given a partial assignment $A$ that assigns jobs $j_1, \ldots, j_\decisionlvl$, we define the silhouette of $A$ as the multiset $\{C_1^A, \ldots, C_m^A\}$.
It is clear that given two partial assignments with the same silhouette, one can be completed to obtain an optimal solution if and only if the other one can as well.
Note here that two partial assignments having equal silhouettes is only possible if they are defined on the same subset of jobs.
Fig.~\ref{fig:silhouette} provides an illustration.
In the context of the \pcmax{} decision problem, we can further generalize this notion using the $\phi$ function.
Instead of storing the makespan of each processor at the current state, we store its current equivalence range.
We define the \textit{Generalised Silhouette (gist)} as the pair $(G, i)$ where $G$ is the multiset $\{ RET[i][C_1^A], \ldots, RET[i][C_m^A] \}$
and $i$ is the index of the first unassigned job $j_i$.
\todo{DS: This last sentence has been missing for some reason, I added it back with (I think) the correct $i$ index. I assume it was deleted in err, since we need the definition of a gist?}



%
%
%
If two states have the same gist, then they are equivalent in the context of the \pcmax{} decision problem:
If one of the two can be completed to obtain a feasible solution, then the other one can as well. 
We refer to our \bnb{} algorithm which exploits this property as \emph{Conflict Driven Silhouette Memory (CDSM)}.
%
%
%
CDSM memorizes 
visited gists and prunes all decisions that would lead to a memorized gist,
since an equivalent partial assignment has already been visited.
In other words, CDSM needs to branch on each stored gist only once.

\section{Bounding Techniques}\label{sec:bounds}


Upper and lower bounding techniques can solve many \pcmax{} instances exactly and accelerate exact solving in many other cases~\cite{lifting_2}.
We now outline our improved bounding methods.

\subsection{Lower Bounds.}\label{subsec:lower_bounds_new}

Good lower bounds are crucial to solve ``simple'' instances without invoking an exact solver.
Moreover, they can 
take the burden from a complete solver to ``prove'' that a certain decision instance is infeasible---a co-\textit{NP}-complete task which can be prohibitively costly in the general case.
As such, lower bounds can drastically accelerate \bnb{} approaches.



Haouari et al.~\cite{lifting_1} show that for a given \pcmax{} instance, there are $O(n)$ easily obtainable sub-instances for which a lower bound can provide a close-to-optimal lower bound on the original instance.
We propose to take this method a step further by solving these sub-instances \emph{optimally}, which yields tighter bounds on our original problem if done correctly.

\paragraph{Lift++.}
Given an instance \pcmaxinstance{} of the \pcmax{} problem and initial upper and lower bounds $U,L$,
we use the procedure presented by Haouari et al.\ to generate sub-instances, and then we bound them from above and below.
We then attempt to optimally solve any sub-instances that might improve upon
the best lower bound $L$ on our original instance.
For any sub-instance with 
bounds $U',L'$ such that $U'>L$ we try to solve it exactly within a given time frame.
We choose our time frame in such a way that solving each instance can be attempted within the time allocated to this bounding technique.
Preliminary tests show that in practical scenarios, the number of such instances tends to be very small (0--5).
These instances also tend to be very small, meaning they can be solved exactly almost all the time.

\subsection{Upper Bounds.}\label{subsec:upper_bounds_new}

Upper bounds reduce the initial search space and thereby reduce the work that an exact solver needs to perform.
It is beneficial to employ a series of upper bounding procedures of varying cost and accuracy:
A cheap upper bound sometimes solves a \pcmax{} instance exactly, which allows to use more expensive 
techniques more sparingly.

\begin{figure*}[t]
    \centering
    \includegraphics[width=0.85\textwidth]{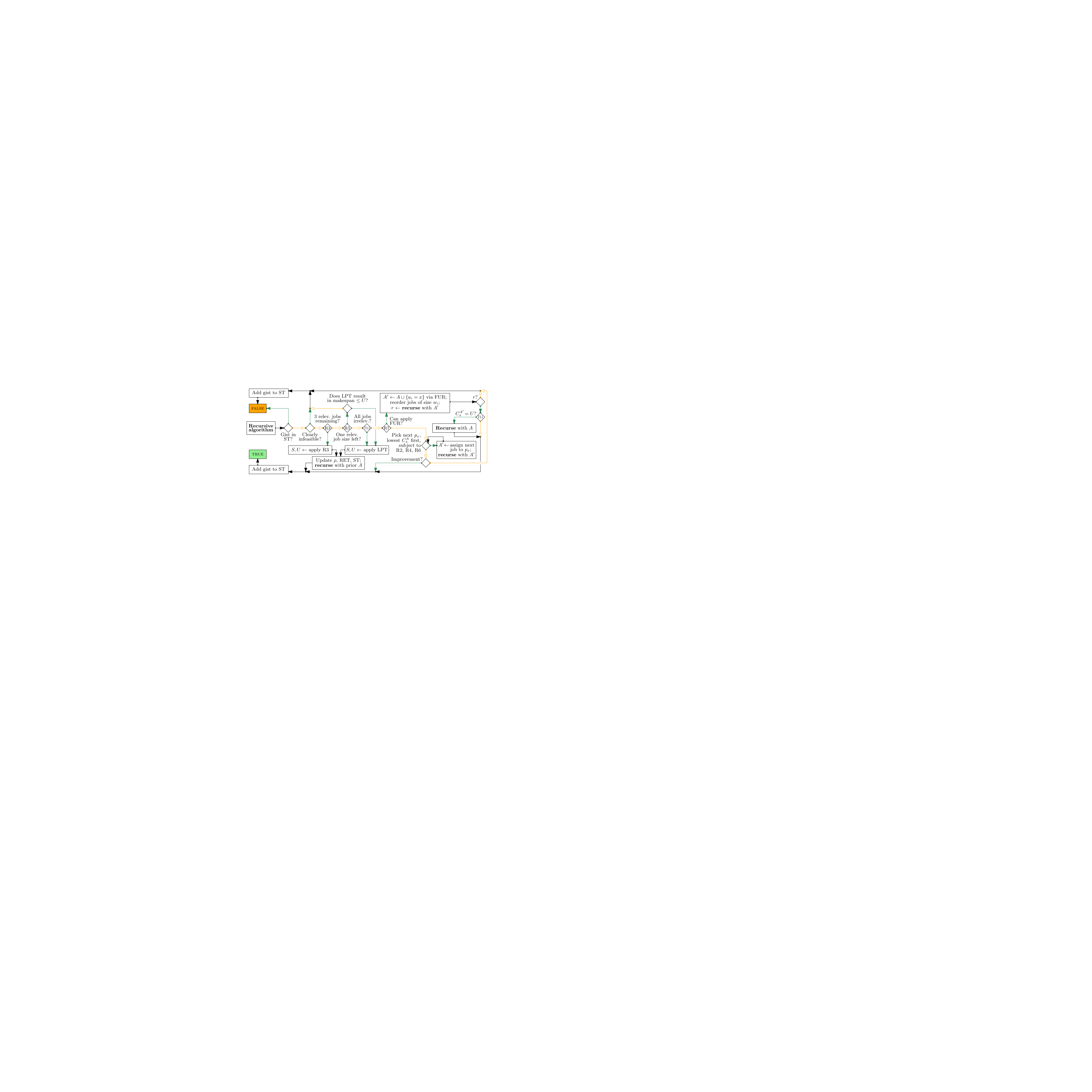}
    \caption{
        \bnb{} algorithm.
        Diamonds represent ``if-then-else'' nodes; green filled arrows represent \textsc{true}-branches, orange non-filled arrows represent \textsc{false}-branches.
        Where applicable, diamonds are labelled with the rule or the theorem enabled by the respective condition.
        Global fields common to all recursive calls are $U$ and $L$, irrelevance index \algirrelevanceindex{}, the \algret{}, state table \algstatetable{}, and best solution \algbestsolution{}.
    }
    \label{fig:flowchart}
\end{figure*}

\paragraph{LPT++.}
We propose to use a modified version of the Fill-Up Rule, which does not require the RET, for improving upper bound heuristics.
Given the largest unassigned job $j_i$ and a processor $p_x$ with remaining space exactly $w_i$,
we can assign $j_i$ to $p_x$ and maintain feasibility.
The \emph{LPT++ heuristic} for \pcmax{} decision instances then works by combining the LPT algorithm with this rule.
To obtain a new upper bound, we execute LPT++ for makespans within a given initial lower and upper bound.
The lowest value where LPT++ succeeds is an upper bound for the original \pcmax{} instance.

\paragraph{MS++.}
Our \emph{MS++ heuristic} is loosely inspired by the \emph{Multi-Subset} heuristic by Dell'Amico and Martello~\cite{dellamico_martello_1995}.
For each processor $p_x$ of a \pcmax{} decision problem, we solve a \emph{Subset-Sum-Problem} (SSP) with the remaining jobs---%
sequentially assigning the subset of jobs that maximises $C_x$ while maintaining $C_x \leq U$.
This can be done in pseudo-polynomial time using dynamic programming~\cite{pisinger1999linear}.
Similar to the LPT++ heuristic,
we then obtain a new bound by executing the MS++ heuristic for makespans within some initial bounds.

\paragraph{S$^4$.} The \emph{Single Start Subset-Sum} heuristic is inspired by the \emph{Multi-Start Subset-Sum} (MSS) heuristic by Haouari and Jemmali~\cite{lifting_2}.
MSS starts with a random assignment and then repeatedly chooses two random processors to reschedule their jobs \textit{optimally} on those two processors via an exact SSP algorithm.
When a local minimum is reached, MSS restarts with a new random assignment.
In the S$^4$ heuristic, we instead start with a high-quality assignment and improve it similarly to MSS.
To escape local minima, we use perturbations:
We select a processor at random, remove $i \ll n$ of its jobs, and reschedule them randomly.
The perturbation factor is gradually increased until a timeout is reached.

\section{Implementation}\label{sec:implementation}



We implemented a single \bnb{} framework integrating all of our applicable techniques.
A flow chart for the full recursive algorithm (CDSM) is given in Fig.~\ref{fig:flowchart}.

\algalgorithm{} takes a partial assignment $A$ (initially $A = \emptyset$) and accesses global variables for $W$, $m$, $L$, and $U$.
In addition, we use
\algbestsolution{} to represent the current best solution, whose makespan is $U+1$ (initialized via some upper bound),
and \algirrelevanceindex{} for the job index where \emph{all later jobs are eliminated} by Theorem~\ref{theorem:smallest_job}.
The RET as well as the 
\emph{State Table} (ST), which is implemented as a hash table, are also global objects.
Before calling \algalgorithm{} for the first time, we pre-compute $\rho$ and the RET (disregarding irrelevant jobs according to $\rho$) once.
\algalgorithm{} returns \algtrue{} iff it completed $A$ in makespan $\leq U$, in which case \algbestsolution{} is guaranteed to be an \emph{optimal} completion of $A$.
This invariant allows to solve the optimization problem with a single top-level call to CDSM
while taking full advantage of the pruning rules for the decision problem.
Note that in some cases a new recursive call is still necessary after finding an improved solution, which is however handled internally in the CDSM procedure.

Whenever a solution is found, we update $U$, the \algret{}, and \algirrelevanceindex{}.
If \algirrelevanceindex{} changes, we lazily reconstruct the \algret{} from the bottom up, due to the newly considered jobs, until we detect that levels no longer change.
Note that we also need to clear the \algstatetable{} in this case.
If \algirrelevanceindex{} does not change,
it suffices to shift each entry in the RET to the left to match the decrease in $U$.
We implement this by incorporating an offset into each \algret{} query.
For each call to \algalgorithm{} with a previously unseen gist, we first check whether completing $A$ within $[L,U)$ is clearly infeasible:
We ignore processors where we can no longer insert any jobs and sum up the free space on the remaining processors.
If this space is less than the sum of the remaining job weights
or if the current partial assignment already has makespan $U$ (which is possible if $U$ was recently updated),
the instance is infeasible. 
Otherwise, we 
apply our pruning rules where possible, excluding jobs that are irrelevant according to \algirrelevanceindex{}.\footnote{
    Note that most pruning rules do not interfere, with one exception.
    When R\ref{pruning_rule:fur} is applied to a job,
    this job is not eligible for R\ref{pruning_rule:interchangeable_jobs} anymore.
    We implement this by marking the according jobs.
} 
After applying R\ref{pruning_rule:last_levels}, R\ref{pruning_rule:equal_remaining_jobs}, or Theorem~\ref{theorem:smallest_job}, recursion may be required if $U$ is updated, since this possibly unlocks further improvements.
If we apply R\ref{pruning_rule:fur} and the initial recursion succeeds, we can either apply Theorem~\ref{theorem:fur_upper_bound} or else need to undo the FUR assignment to recurse with the new bound.
In the most general case, we branch over all processors not pruned by R\ref{pruning_rule:interchangeable_jobs}, R\ref{pruning_rule:fewer_jobs_than_procs}, or  R\ref{pruning_rule:inter}.
We first explore the branch that assigns the job to the least loaded processor,
since preliminary experiments showed this is more effective than selecting the branch based on the processor index.
\todo{C-R: cite thesis!}

\section{Evaluation}\label{sec:evaluation}

We implemented our approach in Rust and used (A) an 80-core ARM Neoverse-N1 at 3\,GHz with 256\,GB of 
RAM and (B) a 64-core AMD EPYC Rome 7702P at $\leq 3.35$\,GHz with 1\,TB of RAM.
Code and data are available at \url{https://github.com/matthewakram/P-Cmax-solver.git}.

\subsection{Benchmarks.}


We consider three kinds of benchmarks.
First and foremost, we use three different benchmark sets from literature, which consist of randomly generated instances:
\begin{itemize}
    \item \texttt{Lawrinenko}: 3500 benchmarks as described by Mrad and Souyah~\cite{8248744}, originating from unpublished work by Lawrinenko, inspired by the findings of Haouari and Jemmali~\cite{lifting_2}.
    \item \texttt{Berndt}: 3060 benchmarks generated as described by Berndt et al.~\cite{berndt2022load}.
    \item \texttt{Frangioni}: 780 benchmarks as described by Frangioni et al.~\cite{frangioni2004multi}.
\end{itemize}
Secondly, we assembled benchmarks based on real data from three application disciplines:
\begin{itemize}
    \item \textbf{Running times.}
    We constructed \pcmax{} instances from running times of certain applications.
    We use running times from SAT Competition 2022 (in s and s/10, each with one unit of upstart time added) and parallel SAT solving~\cite{schreiber2023scalable} (in s/10), bioinformatics tool \texttt{RAxML}~\cite{hoehler2021raxml,huebner2021exploring} (in s), MapReduce applications~\cite{hespe2023enabling} (in min), perfect hashing~\cite{lehmann2023sichash} (in $\mu$s), and queries for dynamic taxi-sharing~\cite{laupichler2024fast} (in ms)\nolinebreak
    \footnote{Even though these running times are \emph{a priori} unknown in reality,
    the underlying job size distributions can still be relevant, e.g., when comparing online schedulers to an optimal offline schedule of \emph{observed} running times~\cite{schreiber2021scalable} or when computing offline schedules based on \emph{predicted} running times~\cite{ngoko2019solving}.}.
    In cases where we sample job sizes from a large set of values and thus $n$ is variable, we used $n/m \in \{2, 2.5, \ldots, 4.5, 5\}$.
    \item \textbf{Vertex degrees in a graph.}
    When processing graphs in distributed systems, a graph is commonly partitioned across the compute nodes,
    while using edge counts to approximate workload.
    We thus consider each vertex as a job whose size is equal to the vertex degree.
    \item \textbf{Clause lengths in a CNF.}
    An undisclosed researcher is considering distributed simplification of SAT formulas.
    Some simplification techniques can be performed for each problem clause of length $>1$ (e.g., clause strengthening), and these simplifications may be load balanced according to the total literals assigned to each processor. 
\end{itemize}
Thirdly, we consider \emph{planted instances}. 
Given parameters $(n, m, U)$, we begin with an optimal scheduling of $m$ jobs on $m$ processors, each with duration $U$.
We draw a pair $(x, t)$ of processor $x$ and time $0 \leq t < U$ uniformly at random and cut the job at this spot into two jobs.
This step is repeated until there are $n$ jobs.
We then increment the duration of $\lceil r \cdot n \rceil$ random jobs, for $r \in \{0, 0.01, 0.05, 0.1\}$, and output a \pcmax{} instance with the sorted job durations.
For each $m$, we considered job-to-processor ratios $n/m \in \{2, 2.5, 3, 4, 5, 7, 10\}$.

For our application and planted benchmarks, we construct \pcmax{} instances for numbers of processors
$m =$ $3$, $4$, $5$, $7$, $10$, $20$, $30$, $50$, $100$, $300$, $500$, $1000$, $2000$, $3000$, and we pre-filter instances where LPT combined with trivial lower bounds identifies an optimal solution.



\subsection{Results.}

\begin{figure*}[t]
    \centering
    \begin{minipage}{0.39\textwidth}
        \includegraphics[width=\textwidth]{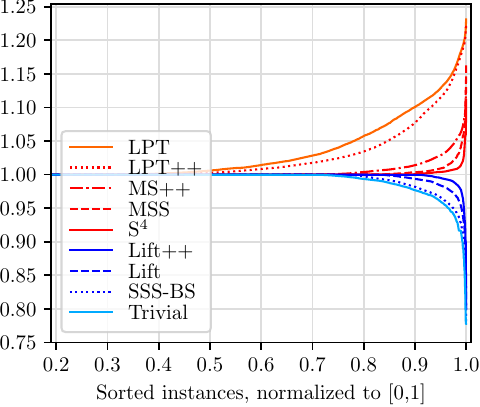}
    \end{minipage}\
    \hspace{0.05\textwidth}
    \begin{minipage}{0.39\textwidth}
        \includegraphics[width=\textwidth]{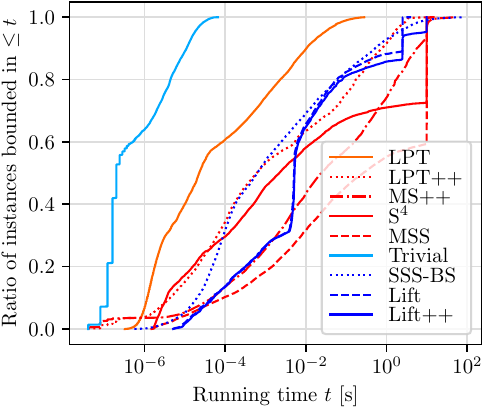}
    \end{minipage}%
    \caption{
        Results of bounding experiments.
        Red denotes upper bounds, blue denotes lower bounds.
        Left: Bounding quality.
        Each curve represents the quotients between a technique's found bound and the corresponding optimal makespan, sorted by their difference to 1 and normalised to the $x$ interval $[0,1]$.
        Right: Cumulative running time distributions of all bounding techniques (higher is better).
    }
    \label{fig:bounding}
\end{figure*}

First of all, Fig.~\ref{fig:bounding} shows results regarding our bounding techniques (machine A).
In this discussion, we only consider instances for which we know an optimal makespan $C^*$ (16\,277 instances), which allows us to rate each found bound $C$ by how close $C/C^*$ is to 1.
We run each bound with a 10\,s timeout.
In terms of lower bounds, we compare our technique Lift++ to the prior lifting approach (``Lift'') 
\cite{lifting_1}, to SSS-bounding-strengthening (``SSS-BS'')~\cite{lifting_2}, and the trivial bound by Dell'Amico and Martello~\cite{dellamico_martello_1995}.
Lift++ is able to improve on the prior lifting technique's lower bounds on 8\% of all instances, bounds 91.1\% of instances \emph{optimally} (83.4\% for Lift), and always reports the best lower bound among all considered bounding techniques.
In turn, Lift++ incurs a geometric mean slowdown of 21.7\% over Lift.

In terms of upper bounds, we compare our approaches to Multi-Start-Subset-Sum~\cite{lifting_2} and LPT.
MS++ provides an appealing middle ground between basic and high-quality bounding techniques in terms of time-quality tradeoff, whereas S$^4$ is able to further improve upon the high-quality bounds of MSS.
Specifically, with MS++ providing a warm start, S$^4$ bounds 89.1\% of instances optimally (82.6\% for MSS) while also achieving a substantial geometric mean speedup of 20.6\% over MSS.
That being said, we acknowledge that literature on upper bounding techniques is vast whereas we only considered a small selection of techniques.

The best combination of a lower bound and an upper bound---Lift++ and S$^4$---tightly bounds 13\,887 out of 16\,765 instances (82.8\%), hence solving these instances exactly.
This reinforces earlier observations that, empirically, many \pcmax{} instances 
are easy to solve exactly with good bounding techniques~\cite{lifting_2}.
Only 8--10\% of CNF, graph, and planted instances 
are non-trivial in this sense, whereas for running time instances this ratio varied drastically, ranging from 0\% for SAT solving to 50\% for taxi-sharing queries.

\begin{figure}
    \centering
    \includegraphics[width=0.7\columnwidth]{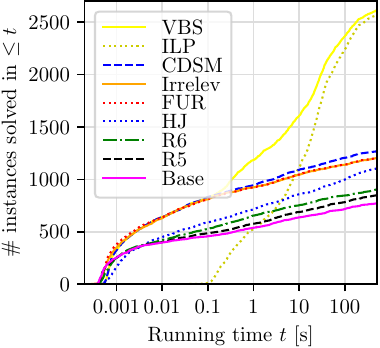}
    \caption{
        Cumulative running times for \bnb{} and ILP approaches, disregarding instances solved exactly by bounding.
        Note the log.\ scale in $x$ direction.
        VBS (``Virtual Best Solver'') depicts a virtual running time oracle combining CDSM and ILP.
    }
    \label{fig:cdf-cdsm-configs}
\end{figure}

\begin{table}
    \centering
    \small
    \setlength{\tabcolsep}{2.5pt}
    \begin{tabular}{lrrrrr}
        \toprule
                           &  R5    & R6     & FUR    & Irr.   & CDSM \\
        \midrule
        Solved ratio       &  1.10  & 1.17   & 1.56   & 1.56   & 1.64 \\
        Speedup            &  1.04  & 1.66   & 14.21  & 13.36  & 11.91 \\
        Expl.\ node ratio  &  1.10  & 2.59   & 59.92  & 59.52  & 89.81 \\
        \bottomrule
    \end{tabular}\\[3mm]
    \textit{Base: 771 instances solved; geom.\ mean time 0.045\,s;\\73\,778 geom.\ mean explored nodes}
    \caption{Relative improvements over ``Base'' w.r.t.\ solved instances, running times, and explored nodes.
        }
    \label{tab:cdf-cdsm-configs}
\end{table}

\begin{figure*}[t]
    \centering
    \begin{minipage}{0.3\textwidth}
        \includegraphics[width=\textwidth]{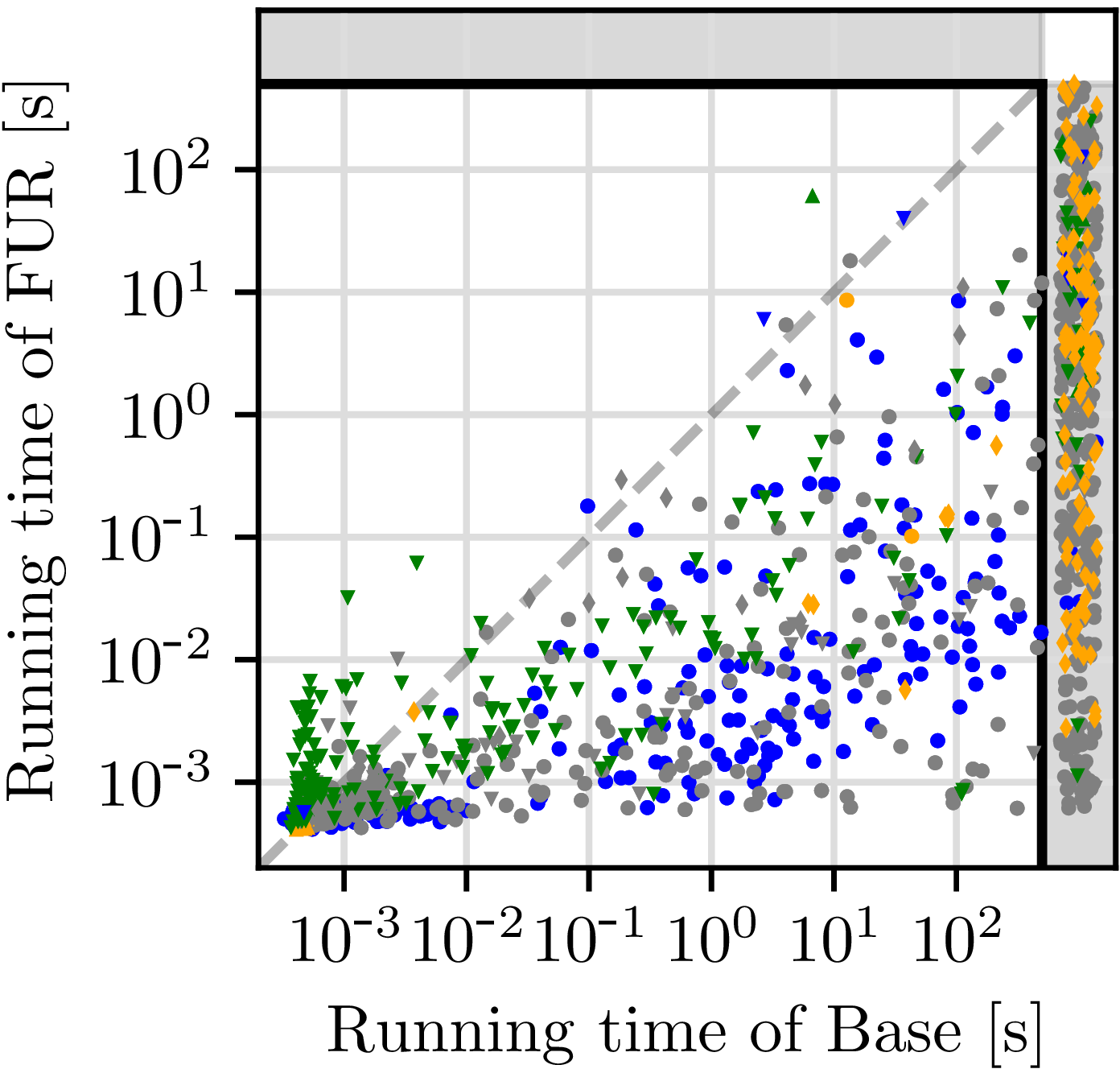}
    \end{minipage}\ \ %
    \begin{minipage}{0.3\textwidth}
        \includegraphics[width=\textwidth]{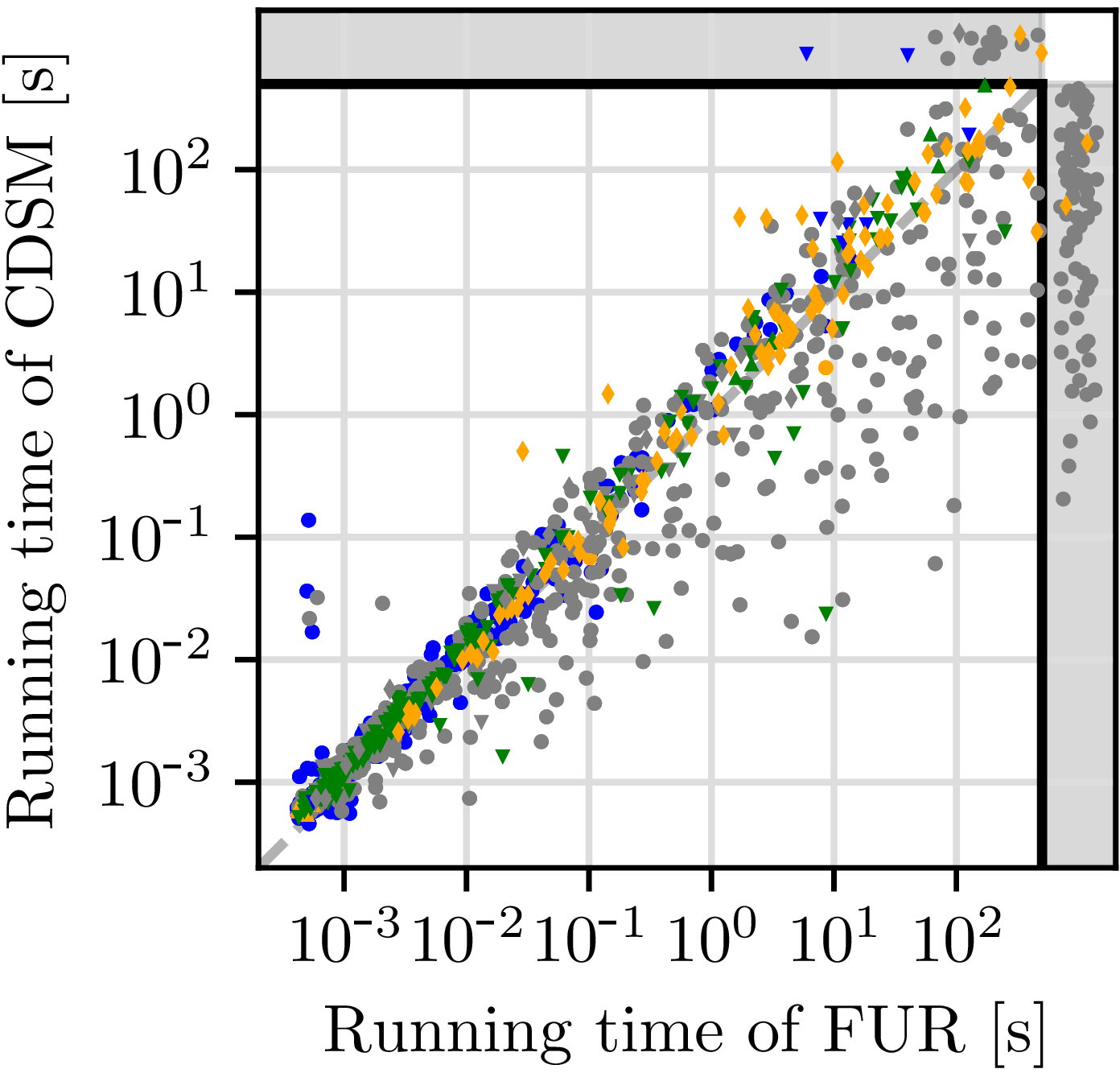}
    \end{minipage}\ \ %
    \begin{minipage}{0.3\textwidth}
        \includegraphics[width=\textwidth]{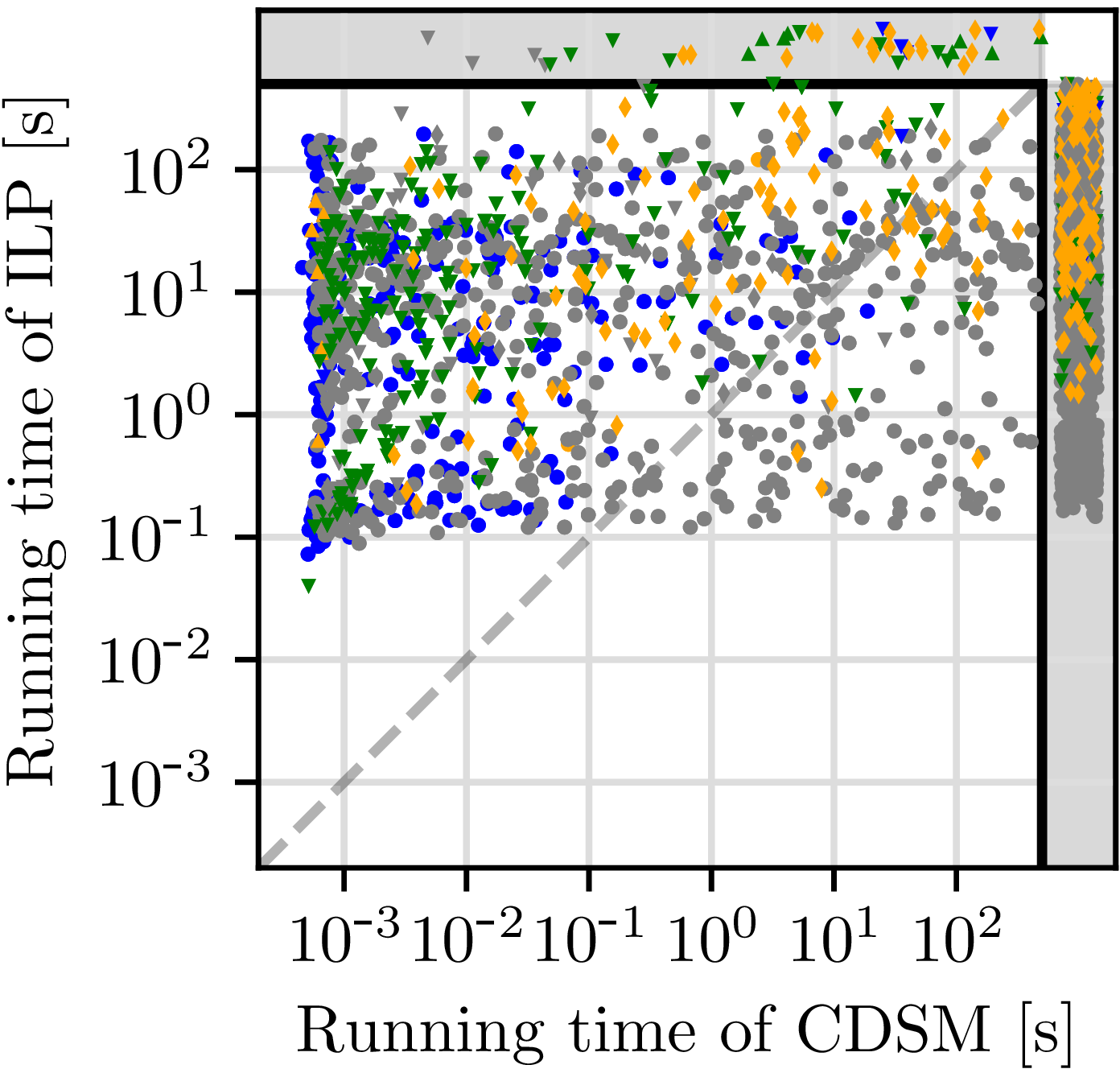}
    \end{minipage}\\[1mm]
    \includegraphics[width=\textwidth]{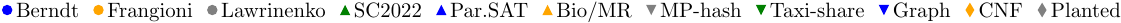}
    \caption{
        Direct comparison of running times of Base vs.\ FUR (left), FUR vs.\ CDSM (center), and CDSM vs.\ the ILP approach (right).
        Note the logarithmic scale in both directions.
    }
    \label{fig:1v1}
\end{figure*}

\begin{table*}[t]
    \centering
    \footnotesize
    \setlength{\tabcolsep}{2.5pt}
    \begin{tabular}{lrrrrrrrrrrrr}
        \toprule
                           &        &       & \multicolumn{8}{c}{\# solved exactly} & \multicolumn{2}{c}{VBS} \\
        \cmidrule(lr){4-11}\cmidrule(lr){12-13}
        Family             &  \#    & \#$^*$ & \ \ Base &  R5    & R6     & FUR    & Irr. & CDSM & HJ   & ILP & \ \ two & all \\ 
        \midrule
        Berndt      &  3060   &  316   &  305        &  304             &  309         &  313       &  313               &  313        &  288   &  313           &  313                         &  316      \\  
        Frangioni   &  780    &  38    &  2          &  2               &  2           &  2         &  2                 &  2          &  16    &  6             &  6                           &  18       \\  
        Lawrinenko  &  3500   &  1569  &  205        &  204             &  246         &  473       &  472               &  539        &  366   &  1558          &  1558                        &  1558     \\  
        SC2022      &  307    &  11    &  1          &  3               &  3           &  8         &  8                 &  8          &  11    &  0             &  8                           &  11       \\  
        RAxML/MR    &  53     &  7     &  7          &  7               &  7           &  7         &  7                 &  7          &  7     &  7             &  7                           &  7        \\  
        MP-hash     &  335    &  125   &  63         &  63              &  67          &  77        &  77                &  79         &  53    &  115           &  119                         &  119      \\  
        Taxi-share  &  746    &  403   &  160        &  160             &  164         &  189       &  189               &  189        &  183   &  233           &  242                         &  257      \\  
        Graph       &  292    &  31    &  3          &  4               &  5           &  8         &  8                 &  6          &  20    &  6             &  10                          &  25       \\  
        CNF         &  6265   &  589   &  8          &  84              &  83          &  104       &  105               &  104        &  121   &  284           &  301                         &  395      \\  
        Planted     &  1064   &  99    &  17         &  17              &  17          &  21        &  21                &  20         &  37    &  49            &  49                          &  69       \\  
        \midrule
        Total       &  16765  &  3188  &  771        &  848             &  903         &  1202      &  1202              &  1267       &  1102  &  2571          &  2613                        &  2775     \\
        \bottomrule
    \end{tabular}

    \vspace{5pt}
    \caption{Number of exactly solved instances by benchmark family. ``\#$^*$'' signifies the number of instances that were not solved by our bounding heuristic, i.e., the number of instances the \bnb{} and ILP approaches were run on.
    The final two columns show a virtual running time oracle (VBS, Virtual Best Solver) of \{CDSM, ILP\} (``two'') and of all eight displayed runs (``all'').
    }
    \label{tab:solved-instances}
\end{table*}

We compare our \bnb{} algorithm
to two competitors from literature:\footnote{
    As an additional point of reference, CDSM with Lift++/S$^4$ solves 99.4\% of all Berndt instances in under 11\,s whereas Berndt et al.'s EPTAS reportedly takes over an hour for the majority of instances~\cite{berndt2022load}.
} the ILP encoding by Mrad and Souayah~\cite{8248744} running on the Gurobi optimizer~\cite{gurobi}, and the \bnb{} scheme by Haouari and Jemmali~\cite{lifting_2} (reimplemented).
For each problem instance, we apply all of the above bounding techniques\footnote{
    We ran these tests on machine B, leading to deviating bounds from the bounding experiments.
} for up to 10\,s and then run \bnb{} or ILP \textit{only if the found bounds are not yet tight} for up to 500\,s.
Fig.~\ref{fig:cdf-cdsm-configs} summarizes the performance on these instances.

\ifappendix\else
    \begin{figure*}[t!]
        \centering
        \includegraphics[width=0.9\textwidth]{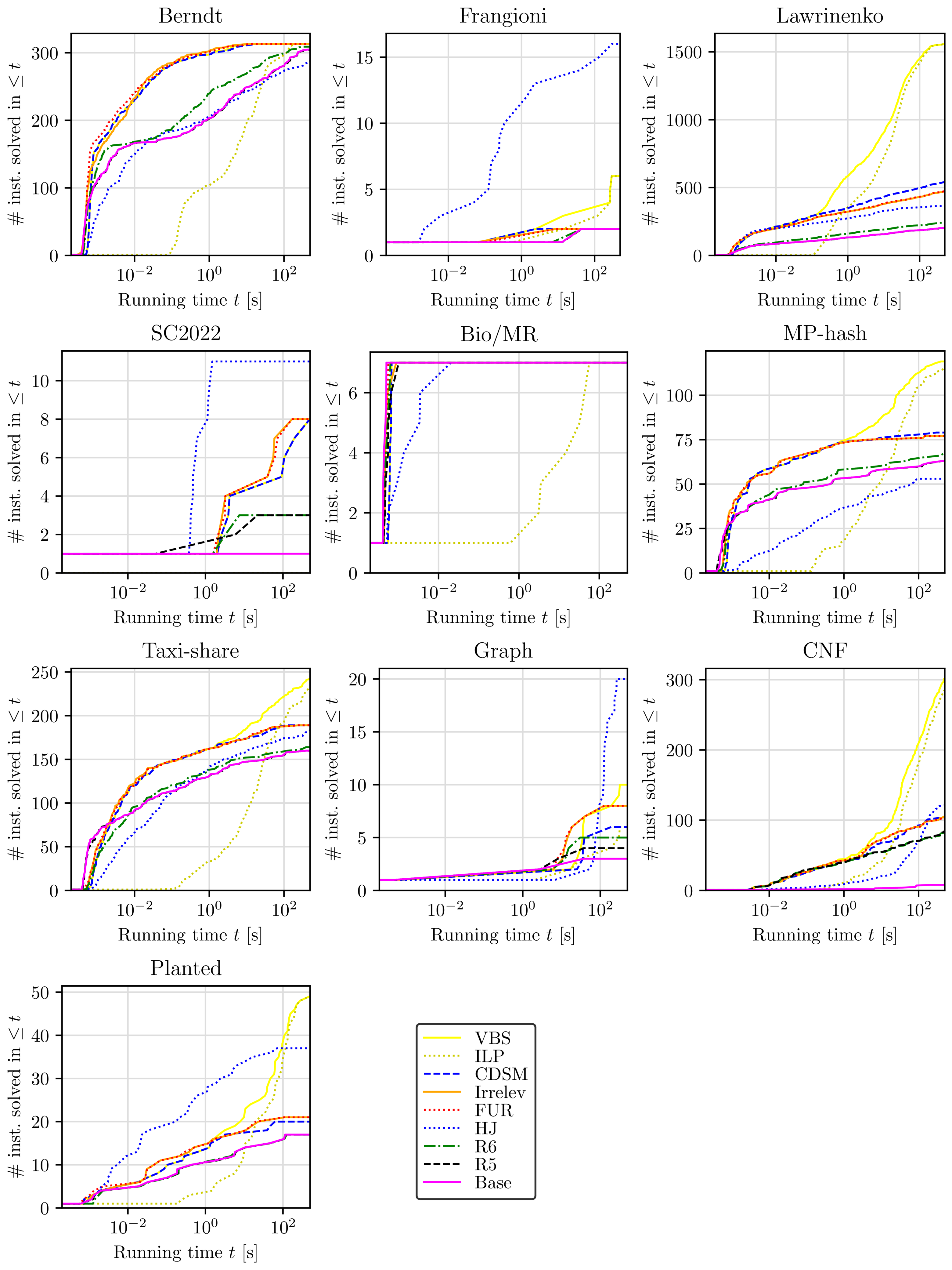}
        \caption{Cumulative running times split by family (without Par.SAT since no instances were solved).
            VBS (``Virtual Best Solver'') depicts a virtual running time oracle combining CDSM and ILP.}
        \label{fig:cdfs-by-domain}
    \end{figure*}
\fi

We considered the following increments of our \bnb{} algorithm: a base version which corresponds to Dell'Amico and Martello's algorithm~\cite{dellamico_martello_1995}; adding R\ref{pruning_rule:equal_remaining_jobs}; adding R\ref{pruning_rule:inter} (and thus maintaining the RET); adding the FUR a.k.a.\ R\ref{pruning_rule:fur}; adding the application of Theorem~\ref{theorem:fur_upper_bound} (and thus maintaining irrelevance index \algirrelevanceindex); and, finally, running full CDSM (and thus maintaining the \algstatetable).
Let us briefly discuss the performance of each increment. 
First, R\ref{pruning_rule:equal_remaining_jobs} brings a significant improvement on the CNF instances, which have large numbers of equally sized small jobs (see Tab.~\ref{tab:benchmark-families-jobsize-distributions}); as soon as all jobs of non-final size have been placed, the problem can be solved immediately. 
Note that none of the prior benchmarks from literature allow to observe this behaviour.
R\ref{pruning_rule:inter} improves performance on several benchmark families; the number of explored nodes is cut in half (geom.\ mean factor 2.17).
By far the most substantial leap in performance is achieved by 
the FUR.
Across the 901 instances solved by both the FUR and the earlier R\ref{pruning_rule:inter} configuration, FUR results in a geometric mean speedup of 8.4 and the number of explored nodes drops by a mean factor of 20.8.
Comparing FUR to the base configuration, Fig.~\ref{fig:1v1} (left) illustrates that running times often improve by several orders of magnitude.
The increment enabling irrelevance rules is the only one without any clear merit.
Lastly, CDSM improves performance drastically on Lawrinenko's instances---relatively small instances designed to be difficult---but incurs noticeable overhead on most other instances (Fig.~\ref{fig:1v1} center) due to maintaining the \algstatetable{}.
As such, the mean speedup from FUR to CDSM on commonly solved instances is very small (1.03).
Still, CDSM is able to solve 65 additional instances and reduce the number of explored nodes by a mean factor of 2.06.
Put together, our improvements to the algorithm of Dell'Amico and Martello~\cite{dellamico_martello_1995} increase the number of solved instances by 64\%, reduce explored nodes by a factor of 89.8, and result in a mean speedup of 11.9 on instances solved by both.


\begin{table}[t!]
    \centering
    \footnotesize
    \setlength{\tabcolsep}{3pt}
    \begin{tabular}{lrrrrrrr}
        \toprule
        & \multicolumn{7}{c}{Job size} \\
        \cmidrule(rr){2-8}
        Family      &  min  &  p25  &  med  &  geom   &  arith   &  p75    &  max     \\
        \midrule
        Berndt      &  1    &  84   &  187  &  167.3  &  303.9   &  514   &  1\,000     \\
        Frangioni   &  1    &  94   &  892  &  583.9  &  2663.3  &  4873  &  10\,000   \\
        Lawrinenko  &  1    &  64   &  92   &  110.7  &  187.7   &  198   &  1\,658    \\
        SC2022      &  1    &  10   &  129  &  127.8  &  2494.6  &  1288  &  50\,004   \\
        Par.SAT     &  2    &  46   &  160  &  135.3  &  366.0   &  428   &  3\,097    \\
        Bio/MR      &  2    &  73   &  233  &  248.1  &  2505.5  &  739   &  526\,875  \\
        MP-hash     &  340  &  439  &  477  &  513.8  &  551.8   &  535   &  16\,365   \\
        Taxi-share  &  111  &  421  &  691  &  722.0  &  987.0   &  1188  &  60\,681   \\
        Graph       &  1    &  5    &  7    &  8.8    &  27.8    &  10    &  26\,068   \\
        CNF         &  2    &  2    &  3    &  2.9    &  3.3     &  3     &  3\,222    \\
        Planted     &  1    &  32   &  97   &  92.6   &  268.9   &  298   &  3\,000    \\
        \bottomrule
    \end{tabular}
    \caption{Job size distributions for each benchmark family; featuring the minimum, median, geometric and arithmetic mean, maximum, and the 25th and 75th percentile.}
    \label{tab:benchmark-families-jobsize-distributions}
\end{table}

As Fig.~\ref{fig:cdf-cdsm-configs} indicates, the HJ algorithm is
also outperformed by our approach.
Still, the solved instances are orthogonal to a degree.
Tab.~\ref{tab:solved-instances} and Fig.~\ref{fig:cdfs-by-domain} \ifappendix(App.)\fi\ show that HJ performs poorly on Berndt, Lawrinenko, and MP-hash instances, which share a moderate number of processors (mostly $\leq 100$) and small $n/m$ (mostly $\leq 10$).
By contrast, HJ outperforms all others on Frangioni, SC2022, and Graph instances, which share especially long 
optimal makespans (
see \ifappendix App.\ \fi Tab.~\ref{tab:benchmark-families-optimum-distributions}).
Interestingly, HJ also outperforms CDSM on planted instances, which have makespans $\leq 3000+\varepsilon$.
We suspect that HJ's different branching heuristics are
especially useful to reconstruct the planted solutions; further analyses are needed to investigate this effect.

\ifappendix\else
    \begin{table*}[t]
        \centering
        \footnotesize
        \setlength{\tabcolsep}{3.5pt}
        \begin{tabular}{lrrrrrrrrr}
            \toprule
            & \multicolumn{9}{c}{Found optimal makespan} \\
            \cmidrule(rr){2-10}
            Family      &  min   &  p10   &  p25    &  med     &  geom           &  arith          &  p75     &  p90      &  max      \\
            \midrule
            Berndt      &  13    &  59    &  171    &  453     &  446.7     &  1563.6    &  1213    &  4436     &  24926    \\
            Frangioni   &  86    &  471   &  1891   &  9861    &  11506.1   &  119130.8  &  93938   &  376316   &  1882118  \\
            Lawrinenko  &  90    &  128   &  158    &  229     &  280.9     &  406.7     &  322     &  1086     &  2477     \\
            SC2022      &  1667  &  9969  &  55546  &  181727  &  182859.6  &  848179.7  &  640124  &  3228482  &  6892847  \\
            Par.SAT     &  2903  &  3712  &  5206   &  11587   &  11110.3   &  15040.8   &  23017   &  30231    &  49036    \\
            Bio/MR      &  135   &  228   &  419    &  670     &  3461.7    &  195810.3  &  54644   &  281311   &  3033958  \\
            MP-hash     &  930   &  1307  &  1595   &  2175    &  2242.7    &  2595.1    &  2632    &  5252     &  16802    \\
            Taxi-share  &  801   &  1363  &  2064   &  4569    &  5096.1    &  9666.9    &  10847   &  26784    &  77495    \\
            Graph       &  5     &  20    &  62     &  394     &  544.6     &  10428.3   &  3700    &  27747    &  333332   \\
            CNF         &  5     &  33    &  117    &  840     &  826.5     &  6601.6    &  5641    &  20727    &  201695   \\
            Planted     &  100   &  101   &  301    &  1000    &  663.6     &  1198.6    &  3000    &  3001     &  3002     \\
            \bottomrule
        \end{tabular}
        \caption{Distributions over optimal makespans for each family; featuring the minimum, median, geometric and arithmetic mean, maximum, and the 10th, 25th, 75th, and 90th percentile.}
        \label{tab:benchmark-families-optimum-distributions}
    \end{table*}
\fi

Next, we compare these \bnb{} results to the ILP approach~\cite{8248744}, which we consider to reflect the state of the art in \pcmax{}.
As Fig.~\ref{fig:cdf-cdsm-configs} and Fig.~\ref{fig:1v1} (right) show, the ILP approach requires a significant upfront investment to generate the ILP encoding, resulting in more than two orders of magnitude of overhead over \bnb{} for the easiest of instances.
This investment does pay off after 10\,s, at which point the ILP approach surpasses all \bnb{} approaches.
Given the full 500\,s, ILP solves substantially more instances (2571 vs.\ 1267 for CDSM), especially on the benchmark set where it was originally evaluated (Lawrinenko)
as well as MP-hash, Taxi-share, CNF, and planted instances.
These families share a modest 
optimal makespan (few thousand).
By contrast, \bnb{} 
performs better on families with large optimal makespans, such as SC2022, Frangioni, and Graph.
For example, CDSM found more solutions with a makespan $\geq 10\,000$ than ILP (87 vs.\ 71).
This indicates that \bnb{} may also
be the more favorable exact approach
for higher \textit{resolutions} of job sizes.
A virtual oracle which picks the approach with lower running time from CDSM and ILP for each instance solves 42 additional instances compared to ILP and, more importantly, results in a mean speedup of 13 over ILP.
An oracle based on ILP and HJ even solves 178 additional instances but only with a speedup of 6.
As such, running our approach for just one second before proceeding with the more heavyweight ILP approach is already an appealing option to boost an exact \pcmax{} scheduler's performance.


\section{Conclusion}\label{sec:conclusion}


We present new pruning rules for the \pcmax{} \emph{decision problem}
and integrate these rules into an efficient \bnb{} scheme for optimally solving the \pcmax{} \emph{optimization problem}.
Together with a lower bounding technique which clearly outperforms previous methods and multiple improvements to upper bounding techniques, this allows to solve some instances with thousands of machines and jobs within seconds.
Our experimental analyses are supported by new benchmark sets crafted from diverse application data, which exhibit characteristics
that synthetic benchmarks from prior \pcmax{} literature fail to capture.
Although a cutting-edge commercial ILP solver still performs substantially better if given sufficient time,
our prototypical implementation is the faster approach and appears to scale better to huge makespans and, therefore, fine discretisations of job sizes. 
All in all, we clearly advanced the state of the art for \bnb{}-based scheduling. 

Our work highlights multiple options for further improvements.
Our efficient base case for a single remaining job duration has proven impactful,
which makes it promising to investigate techniques for extending it to multiple job durations.
Both RET and CDSM are powerful techniques that we believe to bear further potential 
for even more pruning.
In addition, we aim to further optimize our implementation 
and explore possible parallelisations.
Lastly, beyond algorithmic improvements, it might also prove insightful to investigate the differences between instance families and their effect on solvers in more detail.

\vspace{.2cm}
\paragraph{Acknowledgements}
This project has received funding from the pilot program
Core–Informatics of the Helmholtz Association (HGF).

This project has received funding from the European Research Council (ERC) under the European Union’s Horizon 2020 research and innovation programme (grant agreement No. 882500).

\begin{center}
    \includegraphics[width=0.5\linewidth]{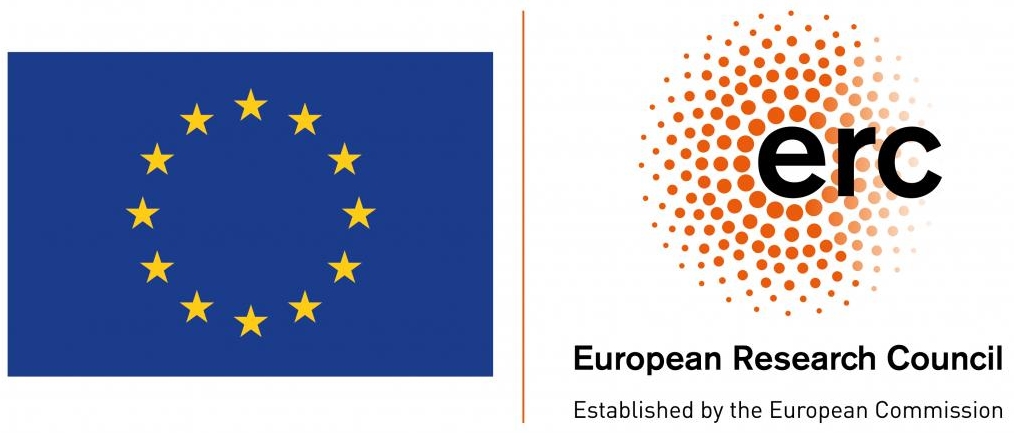}
\end{center}

\printbibliography

\ifappendix
\clearpage

\appendix

\section{Lifting Procedure}\label{appendix:lifting}
The following lifting procedure is presented by Haouari et al.~\cite{lifting_1}.
Haouari and Gharbi~\cite{lifting_minus_1} prove that in any \pcmax{} instance, there are $\alpha$ processors that process at least  
\[
    \lambda_\alpha(n,m) = \alpha \lfloor n/m \rfloor + \min(\alpha, n- \lfloor n/m \rfloor m)
\]
jobs.
We can then define a sub-instance $I'$ on the $\beta \leq n$ largest jobs.
For this instance, and for a suitable $\alpha$, we create a further sub-instance which we refer to as $I_{\beta}^{\alpha}$ by taking the smallest $\lambda_\alpha(\beta, m)$ jobs and $\alpha$ processors.
Lower bounds on $I_{\beta}^{\alpha}$ are also valid bounds on our original \pcmax{} instance.
With that, for a lower bound function $L$, we can improve it by calculating
\[
\overline{L}(I) = \max_{1 \leq \alpha \leq m} \{ \max_{\alpha < \beta \leq n } L(I_\beta^\alpha)\}.
\]

They then show that the only values of $\beta$ that have to be considered are the values such that $\beta = \gamma m + \alpha$ for $\gamma = 1, \ldots, \left\lfloor \frac{n-\alpha}{m}\right\rfloor$.
With that, the number of generated sub-instances is in $O(n)$.
By using other bounding techniques, we can calculate good lower bounds for the generated sub-instances.


\section{Dynamic Programming for Two Remaining Job Durations}
\label{appendix:two_durations}

We can extend the idea from R\ref{pruning_rule:equal_remaining_jobs} to formulate pruning rules for two remaining job durations,
thereby improving the base case even further.
In the following, we consider a decision problem instance $(W, m, U)$ and a valid partial assignment $A$ at decision level $\decisionlvl$.
Note that none of these results is implemented yet, which is a topic for future work.
\psfrage{There are natural generalizations for a pruning rule: round down $w_{\decisionlvl + 1}\ldots w_r$ to $w_r$ and $w_{r+1}\ldots w_n$ to $w_n$. If the resulting 2-weight instance is infeasible then also the actual instance is infeasible. Of course, this opens up a can of tuning parameter worms but it might be useful. For example, look only at the value for $r$ maximizing the rounded down weight. Only if that weight is more than $\alpha$ time the actual weight of the remaining jobs, actually do the dynamic programming. This parameter could also be tuned adaptively by monitoring the fraction of the time spent on dynamic programming. If this exceeds some threshold, increase $\alpha$.}


We can solve the general case of two remaining job durations in polynomial time with a dynamic programming formulation.
Assume that there are $n_a$ remaining jobs with duration $a$ and $n_b$ remaining jobs with duration $b$
(and no other remaining jobs).
Our dynamic program maximizes the number of jobs with duration $b$ which can still be assigned after all jobs with duration $a$ are assigned.
Note, this works for both $a > b$ or $b > a$.

\begin{theorem}\label{theorem:two_duration_dp}
    Let $T\colon [m] \times \mathbb{N}_0 \rightarrow \mathbb{N}_0 \cup \{ -\infty \}$ be a function such that $T(x, k)$ describes
    the number of jobs with duration $b$ which are assignable to the first $x$ processors under the assumption that $k$ jobs with duration $a$ are already assigned to these processors.
    If we define
    \[
        t(x, k) :=
        \begin{cases}
            -\infty & \text{if } C_x^{\decisionlvl} + k \cdot a > U \\
            \left\lfloor \frac{U - C_x^{\decisionlvl} - k \cdot a}{b} \right\rfloor
            & \text{if } C_x^{\decisionlvl} + k \cdot a \le U 
        \end{cases}
    \]
    and define $T$ recursively as
    \[
        T(x, k) :=
        \begin{cases}
            t(x, k) & \text{if } x = 1 \\
            \max_{0 \le i \le k} T(x - 1, k - i) + t(x, i) & \text{if } x > 1
        \end{cases},
    \]
    then $T(m, n_a)$ is the maximal achievable number of still assignable jobs with duration $b$
    after all jobs with duration $a$ are assigned.
\end{theorem}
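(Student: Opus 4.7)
The plan is to prove Theorem~\ref{theorem:two_duration_dp} by induction on $x$, the number of processors considered. The key conceptual point, which I would establish at the outset, is that since jobs sharing a duration are interchangeable, any assignment of $a$- and $b$-jobs to processors is characterized up to isomorphism by how many $a$-jobs and how many $b$-jobs land on each processor. Thus the optimization reduces to choosing, for each processor $p_x$, two integers $k_x$ and $\ell_x$ with $\sum_x k_x = n_a$ (the hard constraint) and $\sum_x \ell_x$ maximized, subject to per-processor feasibility $C_x^\decisionlvl + k_x a + \ell_x b \leq U$.

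First I would verify that $t(x,k)$ correctly expresses the maximum number of $b$-jobs that fit on processor $p_x$ \emph{in isolation}, given that $k$ many $a$-jobs are already placed there. The two cases are immediate: if the $k$ $a$-jobs themselves overflow $U$, nothing works and the value is $-\infty$; otherwise, the residual capacity $U - C_x^\decisionlvl - ka$ is filled greedily with $b$-jobs, yielding the floor expression. This establishes the base case $T(1, k) = t(1, k)$ directly from the definition.

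For the inductive step, I assume $T(x-1, \cdot)$ correctly gives the maximum number of $b$-jobs assignable to the first $x-1$ processors for every total count of $a$-jobs placed on them. For $T(x, k)$, any feasible distribution on the first $x$ processors is determined by how many $a$-jobs go to processor $p_x$, call this $i$, with the remaining $k-i$ distributed among processors $1, \ldots, x-1$. Since the capacity constraints decouple across processors once $i$ is fixed, the best overall outcome is the sum of the two independent optima, namely $T(x-1, k-i) + t(x, i)$. Maximizing over $i \in \{0, \ldots, k\}$ gives exactly the recursive definition of $T(x, k)$. A small remark is needed that if every term in the max is $-\infty$ (no feasible split exists), $T(x,k) = -\infty$ is the correct value, and that if $i > k$ were considered it would be irrelevant because we are constrained to distribute exactly $k$ $a$-jobs.

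The main (and really the only) obstacle is the decoupling argument: one must make explicit that the feasibility of a placement on processor $p_x$ depends only on $i$ and not on how the other $k-i$ $a$-jobs and the $\ell$-many $b$-jobs are arranged on processors $1, \ldots, x-1$. Once this observation is made precise, everything else is a routine inductive bookkeeping argument, and the conclusion $T(m, n_a)$ equals the maximum achievable number of assignable $b$-jobs follows by applying the inductive statement at $x = m$ with $k = n_a$.
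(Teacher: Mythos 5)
Your proposal is correct and follows essentially the same route as the paper's proof: an induction on the processor index $x$, with the base case $x=1$ immediate from the definition of $t$, and the inductive step justified by the fact that the maximum ranges over all ways of splitting the $a$-jobs between processor $p_x$ and the first $x-1$ processors (the per-processor decoupling you make explicit is exactly what the paper leaves implicit). Your write-up is merely more detailed than the paper's brief argument; no substantive difference in approach.
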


\begin{proof}
    It is sufficient to show that $T(x, k)$ is actually
    the maximal possible number of assignable jobs with duration $b$ when assigning $k$ jobs with duration $a$ to the first $x$ processors,
    or $-\infty$ if no valid assignment exists.
    Clearly, this is the case for $x=1$.
    For larger $x$, this follows by induction
    since the maximum covers all possible cases of splitting the available jobs between the current processor and the processors with smaller index.
\end{proof}

Since the table has at most $m \cdot n_a$ entries and computing an entry needs at most $n_a$ lookups,
this theorem allows to decide in time $O(m \cdot n_a^2)$ whether a valid completion of $A$ exists.
For an actual implementation, some further improvements are possible.
First, for the computation of the minimum, it suffices to consider values of $i$ where both terms are finite
(and returning $-\infty$ if no such $i$ exists).
The maximum index where the entry for processor $p_x$ is finite is $\sum_{y=1}^x \max \{ k \mid t(x, k) > -\infty \}$,
thus these upper bounds can be computed via a prefix sum over the processors.
Second, even the remaining entries are not all required
(e.g., we need only the entry with index $n_a$ from the last column).
This allows to compute a lower bound for $i$ in a similar way to the upper bound, but starting from the last processor.

\psfrage{Pruning Rule 5. There seem to be interesting generalizations?\\ If $\sum_{x=1}^m \left\lfloor \frac{U - C_x^{\decisionlvl}}{\min_{j=\ell+1}^{n}w_j} \right\rfloor > n - \decisionlvl$ then there is no feasible solution regardless whether unassigned jobs have equal duration and regardless of the ordering of the jobs. This rule also works for the optimization problem with the lower bound?\\ Similarly, if $\sum_{x=1}^m \left\lfloor \frac{U - C_x^{\decisionlvl}}{\max_{j=\ell+1}^{n}w_j} \right\rfloor \le n - \decisionlvl$ then there is a feasible solution. The latter is perhaps not relevant as the DFS will dive down to a solution anyway.}

\psfrage{Theorem 2. Can this be generalized to a pruning rule that can be used at any decision level? sth like \[
        \sum_{i=\ell+1}^{n-1} w_i < mU-\sum_{i=1}^m C_x^{\ell} - ??? + ???,
    \] or is that never helpful?}\todo{NM: would be equivalent, I think}

\todo{NM: Concretely, I think it might be possible to strengthen R\ref{pruning_rule:interchangeable_jobs} a bit.}

\todo{NM: Future work: investigate branching heuristic that uses RET to try minimize remaining space on a processor}
\todo{NM: Future work: Better infeasibility check by using the RET? (necessarily unused space even for not yet filled processors)}


\section{Supplementary Material}
\label{appendix:supplementary}

On the next pages we provide some additional figures and tables.

\begin{table*}[h!]
    \centering
    \footnotesize
    \setlength{\tabcolsep}{3.5pt}
    \begin{tabular}{lrrrrrrrrr}
        \toprule
        & \multicolumn{9}{c}{Found optimal makespan} \\
        \cmidrule(rr){2-10}
        Family      &  min   &  p10   &  p25    &  med     &  geom           &  arith          &  p75     &  p90      &  max      \\
        \midrule
        Berndt      &  13    &  59    &  171    &  453     &  446.7     &  1563.6    &  1213    &  4436     &  24926    \\
        Frangioni   &  86    &  471   &  1891   &  9861    &  11506.1   &  119130.8  &  93938   &  376316   &  1882118  \\
        Lawrinenko  &  90    &  128   &  158    &  229     &  280.9     &  406.7     &  322     &  1086     &  2477     \\
        SC2022      &  1667  &  9969  &  55546  &  181727  &  182859.6  &  848179.7  &  640124  &  3228482  &  6892847  \\
        Par.SAT     &  2903  &  3712  &  5206   &  11587   &  11110.3   &  15040.8   &  23017   &  30231    &  49036    \\
        Bio/MR      &  135   &  228   &  419    &  670     &  3461.7    &  195810.3  &  54644   &  281311   &  3033958  \\
        MP-hash     &  930   &  1307  &  1595   &  2175    &  2242.7    &  2595.1    &  2632    &  5252     &  16802    \\
        Taxi-share  &  801   &  1363  &  2064   &  4569    &  5096.1    &  9666.9    &  10847   &  26784    &  77495    \\
        Graph       &  5     &  20    &  62     &  394     &  544.6     &  10428.3   &  3700    &  27747    &  333332   \\
        CNF         &  5     &  33    &  117    &  840     &  826.5     &  6601.6    &  5641    &  20727    &  201695   \\
        Planted     &  100   &  101   &  301    &  1000    &  663.6     &  1198.6    &  3000    &  3001     &  3002     \\
        \bottomrule
    \end{tabular}
    \caption{Distributions over optimal makespans for each family; featuring the minimum, median, geometric and arithmetic mean, maximum, and the 10th, 25th, 75th, and 90th percentile.}
    \label{tab:benchmark-families-optimum-distributions}
\end{table*}

\begin{figure*}[h!]
    \centering
    \includegraphics[width=0.9\textwidth]{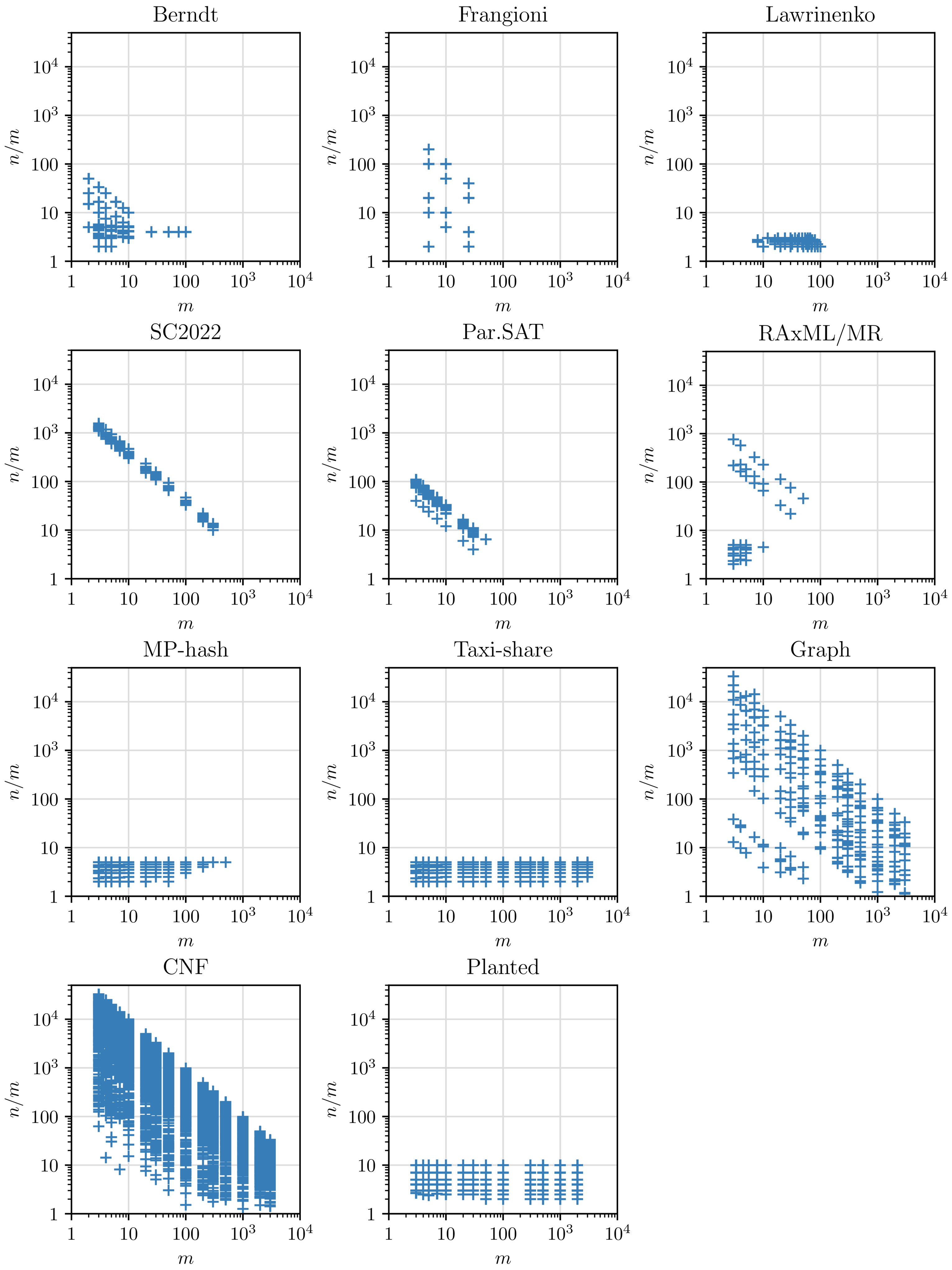}
    \caption{Distribution over values of $m$ and $n/m$ for each benchmark family.}
    \label{fig:benchmark-families-m-distributions}
\end{figure*}

\begin{figure*}[h!]
    \centering
    \includegraphics[width=0.9\textwidth]{cdfs-by-domain.png}
    \caption{Running times split by family (without Par.SAT since no instances were solved).
            VBS (``Virtual Best Solver'') depicts a virtual running time oracle combining CDSM and ILP.}
    \label{fig:cdfs-by-domain}
\end{figure*}

\fi

\end{document}
